  \providecommand\BibTeX{{%
    \normalfont B\kern-0.5em{\scshape i\kern-0.25em b}\kern-0.8em\TeX}}}
\renewcommand\footnotetextcopyrightpermission[1]{} %
\begin{document}
\fancyhead{} 

\title[Norm-in-Norm Loss]{Norm-in-Norm Loss with Faster Convergence and Better Performance for Image Quality Assessment}

\author{Dingquan Li}
\email{dingquanli@pku.edu.cn}
\orcid{0000-0002-5549-9027}
\affiliation{%
  \institution{NELVT, LMAM, School of Mathematical Sciences \& BICMR, Peking University}
  % \streetaddress{5 Yiheyuan Rd}
  % \city{Haidian District}
  % \state{Beijing}
  % \country{China}
  }

\author{Tingting Jiang}
\email{ttjiang@pku.edu.cn}
\affiliation{%
  \institution{NELVT, Department of Computer Science, Peking University}
  % \streetaddress{5 Yiheyuan Rd}
  % \city{Haidian District}
  % \state{Beijing}
  % \country{China}
  }
  
\author{Ming Jiang}
\email{ming-jiang@pku.edu.cn}
\affiliation{%
  \institution{NELVT, LMAM, School of Mathematical Sciences \& BICMR, Peking University}
  % \streetaddress{5 Yiheyuan Rd}
  % \city{Haidian District}
  % \state{Beijing}
  % \country{China}
  }

\renewcommand{\shortauthors}{Li et al.}

\begin{abstract}
Currently, most image quality assessment (IQA) models are supervised by the MAE or MSE loss with empirically slow convergence.
It is well-known that normalization can facilitate fast convergence.
Therefore, we explore normalization in the design of loss functions for IQA.
Specifically, we first normalize the predicted quality scores and the corresponding subjective quality scores. 
Then, the loss is defined based on the norm of the differences between these normalized values.
The resulting ``Norm-in-Norm'' loss encourages the IQA model to make linear predictions with respect to subjective quality scores.
After training, the least squares regression is applied to determine the linear mapping from the predicted quality to the subjective quality.
It is shown that the new loss is closely connected with two common IQA performance criteria (PLCC and RMSE).
Through theoretical analysis, it is proved that the embedded normalization makes the gradients of the loss function more stable and more predictable, which is conducive to the faster convergence of the IQA model.
Furthermore, to experimentally verify the effectiveness of the proposed loss, it is applied to solve a challenging problem: quality assessment of in-the-wild images.
Experiments on two relevant datasets (KonIQ-10k and CLIVE) show that, compared to MAE or MSE loss, the new loss enables the IQA model to converge about 10 times faster and the final model achieves better performance.
The proposed model also achieves state-of-the-art prediction performance on this challenging problem.
For reproducible scientific research, our code is publicly available at \url{https://github.com/lidq92/LinearityIQA}.
\end{abstract}

%%
%% The code below is generated by the tool at http://dl.acm.org/ccs.cfm.
% \begin{CCSXML}
% <ccs2012>
% <concept>
% <concept_id>10010147.10010257.10010258.10010259.10010264</concept_id>
% <concept_desc>Computing methodologies~Supervised learning by regression</concept_desc>
% <concept_significance>500</concept_significance>
% </concept>
% <concept>
% <concept_id>10010147.10010178.10010224.10010240.10010241</concept_id>
% <concept_desc>Computing methodologies~Image representations</concept_desc>
% <concept_significance>300</concept_significance>
% </concept>
% <concept>
% <concept_id>10002944.10011123.10010916</concept_id>
% <concept_desc>General and reference~Measurement</concept_desc>
% <concept_significance>300</concept_significance>
% </concept>
% <concept>
% <concept_id>10002944.10011123.10011124</concept_id>
% <concept_desc>General and reference~Metrics</concept_desc>
% <concept_significance>300</concept_significance>
% </concept>
% <concept>
% <concept_id>10010147.10010371.10010382.10010383</concept_id>
% <concept_desc>Computing methodologies~Image processing</concept_desc>
% <concept_significance>500</concept_significance>
% </concept>
% </ccs2012>
% \end{CCSXML}

% \ccsdesc[500]{Computing methodologies~Image processing}
% \ccsdesc[500]{Computing methodologies~Supervised learning by regression}
% \ccsdesc[300]{Computing methodologies~Image representations}
% \ccsdesc[300]{General and reference~Measurement}
% \ccsdesc[300]{General and reference~Metrics}

\keywords{IQA; faster convergence; loss function; normalization}

\maketitle

\section{Introduction}
\label{sec:introduction}

Image quality assessment (IQA) has received considerable attention~\cite{wang2004image,mittal2012no,ye2012unsupervised,kang2014convolutional,ma2016group,hosu2019koniq} and plays a key role in many vision applications, such as compression~\cite{rippel2019learned} and super-resolution~\cite{zhang2019ranksrgan}.
It can be achieved by subjective study or objective models. 
Subjective study uses mean opinion score (MOS) to assess image quality.
This is considered as the most reliable and accurate way, whereas it is expensive and time-consuming.
So the objective models that can automatically predict image quality are in urgent need.
In terms of the availability of the reference image, objective IQA models can be divided into three categories: full-reference IQA~\cite{wang2004image,zhang2014vsi,kim2017deep0,bosse2018deep}, reduced-reference IQA~\cite{ma2011reduced,xu2015fractal,bampis2017speed,liu2018reduced}, and no-reference IQA~\cite{mittal2012no,kang2014convolutional,liu2017rankiqa,ren2018ran4iqa}.

Most classic learning-based IQA models are based on mapping the handcrafted features to image quality by support vector regression (SVR)~\cite{mittal2012no}.
Recently, deep learning-based models, which jointly learn feature representation and quality prediction, show great promise in IQA~\cite{kang2014convolutional,ren2018ran4iqa,lin2018hallucinated,hosu2019koniq}.
However, these models mostly treat IQA as a general regression problem.
And they adopt standard regression loss functions for training, \textit{i.e.}, mean absolute error (MAE) and mean square error (MSE) between the predicted quality scores and the corresponding subjective quality scores.
We notice a fact that the IQA models trained using MAE or MSE loss exhibit slow convergence. 
For example, on a dataset containing only about 10,000 images with a resolution of 664$\times$498, training the model on an NVIDIA GeForce RTX 2080 Ti GPU (11GB) takes more than one day to reach convergence.
Since the size of the training dataset becomes larger and larger in the deep learning era, faster convergence is preferable to reduce the training time.

In this work, we tackle the slow convergence problem in the context of IQA. 
In fact, slow convergence problem is common in machine learning and computer vision, which may be due to the non-smooth loss landscape~\cite{santurkar2018does}.
To provide faster convergence for the learning process, it is widely-used to do input data normalization or intermediate feature rescaling~\cite{ioffe2015batch,hoffer2018norm}.
Normalizing the output predictions is rarely recommended.
However, it is shown that normalizing the network output can lead to faster convergence of generative networks for image super-resolution~\cite{mullery2018batch}.
Inspired by this work, to achieve fast convergence of IQA model training, we explore normalization in the design of loss functions for IQA. 

\begin{figure*}[!htb]
    \centering
    \includegraphics[width=.85\linewidth]{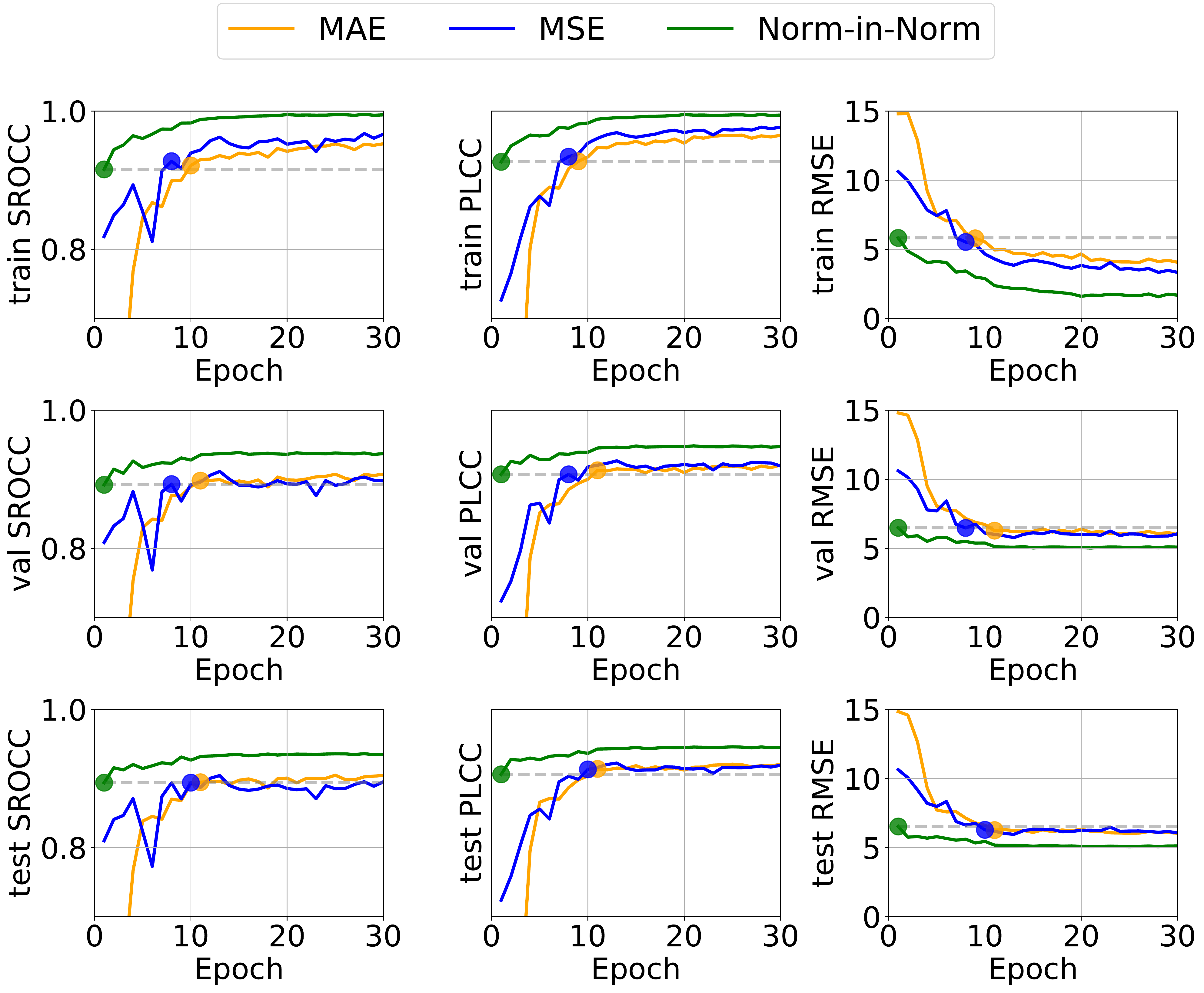}
    \caption{The training/validation/testing curves on KonIQ-10k of the models trained with MAE, MSE, and the proposed ``Norm-in-Norm'' loss. SROCC, PLCC and RMSE are three criteria for benchmarking IQA models, where larger SROCC/PLCC and smaller RMSE indicate better prediction performance. The circle marker shows the first time it surpasses a prediction performance indicated by the grey dash line.}%MAE and MSE are empirically about ten times slower than ``Norm-in-Norm'' to achieve this. And the final convergence state also indicates that our loss leads to better prediction performance than MAE and MSE.}
    \label{fig:loss}
\end{figure*}

We propose a class of loss functions, denoted as ``Norm-in-Norm'', for training an IQA model with fast convergence.
Specifically, the predicted quality scores is firstly subtracted by their mean, and then they are divided by their norm after centralization. 
Similar normalization is applied to the subjective quality scores.
After the normalization, we define the new loss based on the norm of the differences between the normalized values.
The new loss normalizes both labels and predictions while label normalization only normalizes labels, and the new loss encourages the IQA model to make linear predictions with respect to (w.r.t.) subjective quality scores.
Hence, after training, the linear relationship can be determined by applying least squares regression (LSR) on the whole training set for mapping the predictions to the subjective quality scores. 
In the testing phase, this learned linear relationship is applied to the model prediction to get the final predicted quality score for a test image.
% Besides, the design of the new loss function may look similar to batch normalization (BN)~\cite{ioffe2015batch}. 
% So it is worth pointing out their differences. 
% Unlike BN, the proposed method does not change the architecture and it also normalizes subjective quality scores. 
% Moreover, it determines the linear relationship by LSR instead of the cumulation of batch statistics. 
% Their experimental comparisons and analysis are shown in Sec.~\ref{sec:loss-exp}.

There are two interesting findings about the new loss.
First, we derive that Pearson's linear correlation coefficient (PLCC)-induced loss~\cite{ma2018geometric,liu2018end} is a special case of the proposed loss, where PLCC is a criterion for benchmarking IQA models.
Second, after introducing a variant of the proposed loss, we show its connection to root mean square error (RMSE) --- another IQA performance criterion.

Further, we conduct theoretical analysis on the property of the new loss. 
And it is proved that due to the embedded normalization, the new loss has stronger Lipschitzness and $\beta$-smoothness~\cite{nesterov2013introductory}, which means the gradients of the loss function is more stable and more predictable. 
Thus, the gradient-based algorithm for learning the IQA model has a smoother loss landscape.
And this is conducive to the faster convergence of the IQA model.

Generally, the proposed ``Norm-in-Norm'' loss can be applied to any regression problem, including IQA problems.
In particular, we pick a challenging real-world IQA problem: quality assessment of in-the-wild images, to verify the effectiveness of the proposed loss.
Quality assessment of in-the-wild images has two challenges, \textit{i.e.}, content dependency\footnote{Content dependency means human ratings of image quality depend on image content} and distortion complexity.
We design a no-reference IQA model based on aggregating and fusing multiple level deep features extracted from the image classification network.
Deep features are considered for tackling the content dependency. 
And multiple level features are used for handling the distortion complexity.
The model is trained with the proposed loss. 
The experiments are conducted on two benchmark datasets, \textit{i.e.}, CLIVE~\cite{ghadiyaram2016massive} and KonIQ-10k~\cite{hosu2019koniq}.
Figure~\ref{fig:loss} shows the convergence results on KonIQ-10k.
By using the proposed loss, the model only needs to look at the training images once (\textit{i.e.}, in one epoch) to achieve a prediction performance indicated by the grey dash line.
It is about 10 times faster than MAE loss and MSE loss.
This verifies that the proposed loss facilitates faster convergence for training IQA models.
We claim that it mainly benefits from the embedded normalization in our loss.
Besides the faster convergence, we notice that the proposed loss also leads to a better prediction performance than MAE loss and MSE loss.

To sum up, our main contribution is that we propose a class of normalization-embedded loss functions in the context of IQA.
The new loss is shown to have some connections to PLCC and RMSE.
And our theoretical analysis proves that the embedded normalization in the proposed loss can facilitate faster convergence.
For the quality assessment of in-the-wild images, it is experimentally verified that the new loss can provide both better prediction performance and faster convergence than MAE loss and MSE loss. 
What's more, the proposed model outperforms state-of-the-art models on this challenging IQA problem.

\section{``Norm-in-Norm'' Loss}
\label{sec:loss}

In this section, we explore the normalization in the design of loss functions, and propose a class of ``Norm-in-Norm'' loss functions for training IQA models with fast convergence.
The idea is to apply normalization for the predicted quality scores and the subjective quality scores respectively using their own statistics when computing the loss. 

Assume we have $N$ images on the training batch. 
For the $i$-th image $I_i$, we denote the predicted quality by an objective IQA model $F(\cdot; \bm{\theta})$ as $\hat{Q}_{i}$ and its subjective quality score (\textit{i.e.}, MOS) as $Q_{i}$.

\subsection{Loss Computation}
\label{sec:loss computation}
Our loss computation can be generally divided into three steps: computation of the statistics, normalization based on the statistics, and loss as the norm of the differences between the normalized values.
The left part of Figure~\ref{fig:norm-in-norm loss} shows an illustration of the forward path of the proposed loss. 
We detail each step in the following.

\paragraph{Computation of the Statistics} 
First, given the predicted quality scores $\hat{\mathbf{Q}}=(\hat{Q}_{1}, \cdots, \hat{Q}_{N})$, we calculate their mean $\hat{a}$. 
Similarly, given the subjective quality scores ${\mathbf{Q}}=({Q}_{1}, \cdots, {Q}_{N})$, their mean $a$ is calculated. 
The $L^q$-norm of the centered values is then computed, respectively.
\begin{align}
\hat{a} &= \frac{1}{N}\sum_{i=1}^{N} \hat{Q}_{i},\ 
\hat{b} = {\left(\sum_{i=1}^{N} {|\hat{Q}_{i}-\hat{a}|}^{q}\right)}^{\frac{1}{q}}, \\
a &= \frac{1}{N}\sum_{i=1}^{N} {Q}_{i},\ 
b = {\left(\sum_{i=1}^{N} {|{Q}_{i}-{a}|}^{q}\right)}^{\frac{1}{q}}\label{eq:stats2},
\end{align}
where $q\ge 1$ is a hyper-parameter. $\hat{b}$ and $b$ are the norm of the centered predicted quality scores and the norm of the centered subjective quality scores, respectively.

\paragraph{Normalization Based on the Statistics}
Second, we normalize the predicted quality scores and the subjective quality scores based on their own mean and centered norm, respectively. That is, we first subtract the mean from the predicted/subjective quality scores and then divide them by the norm. 
\begin{align}
\hat{S}_i &= \frac{\hat{Q}_i-\hat{a}}{\hat{b}}, i=1,\cdots,N, \label{eq:normalization1}\\
{S}_i &= \frac{{Q}_i-{a}}{{b}}, i=1,\cdots,N,\label{eq:normalization2} 
\end{align}
where $\hat{\mathbf{S}}=(\hat{S}_1,\cdots,\hat{S}_N)$ are the normalized predicted quality scores, and ${\mathbf{S}}= ({S}_1,\cdots,{S}_N)$ are the normalized subjective quality scores.

\paragraph{Loss As the Norm of the Differences}
The final step computes the differences $\hat{\mathbf{S}}-{\mathbf{S}}$ between the normalized predicted quality scores $\hat{\mathbf{S}}$ and the normalized subjective quality scores ${\mathbf{S}}$.
Then the loss $l$ is defined as the $p$-th power of the $L^p$-norm of the differences ($p\ge 1$), and it is normalized to $[0, 1]$.
\begin{equation}
    l(\hat{\mathbf{Q}}, \mathbf{Q}) = \frac{1}{c}\sum_{i=1}^{N} {|\hat{S}_{i}-{S}_{i}|}^{p}, 
\end{equation}
where $c$ is a normalization factor. $c$ can be determined using Minkowski inequality and H\"older's inequality (see the \textbf{Supplementary Materials}~\ref{sec:supp_a}), and it follows the following equation.

\begin{equation}\label{eq:factor}
c = 
\begin{cases}
    2^pN^{1-\frac{p}{q}} & \mbox{if}\quad p<q, \\
    2^p & \mbox{if}\quad p\ge q.
\end{cases}
\end{equation}

Based on the forward propagation, we can easily conduct the backward propagation by the chain rule, which is described in the right part of Figure~\ref{fig:norm-in-norm loss}.

\begin{figure}[!t]
    \centering
    \includegraphics[width=\linewidth]{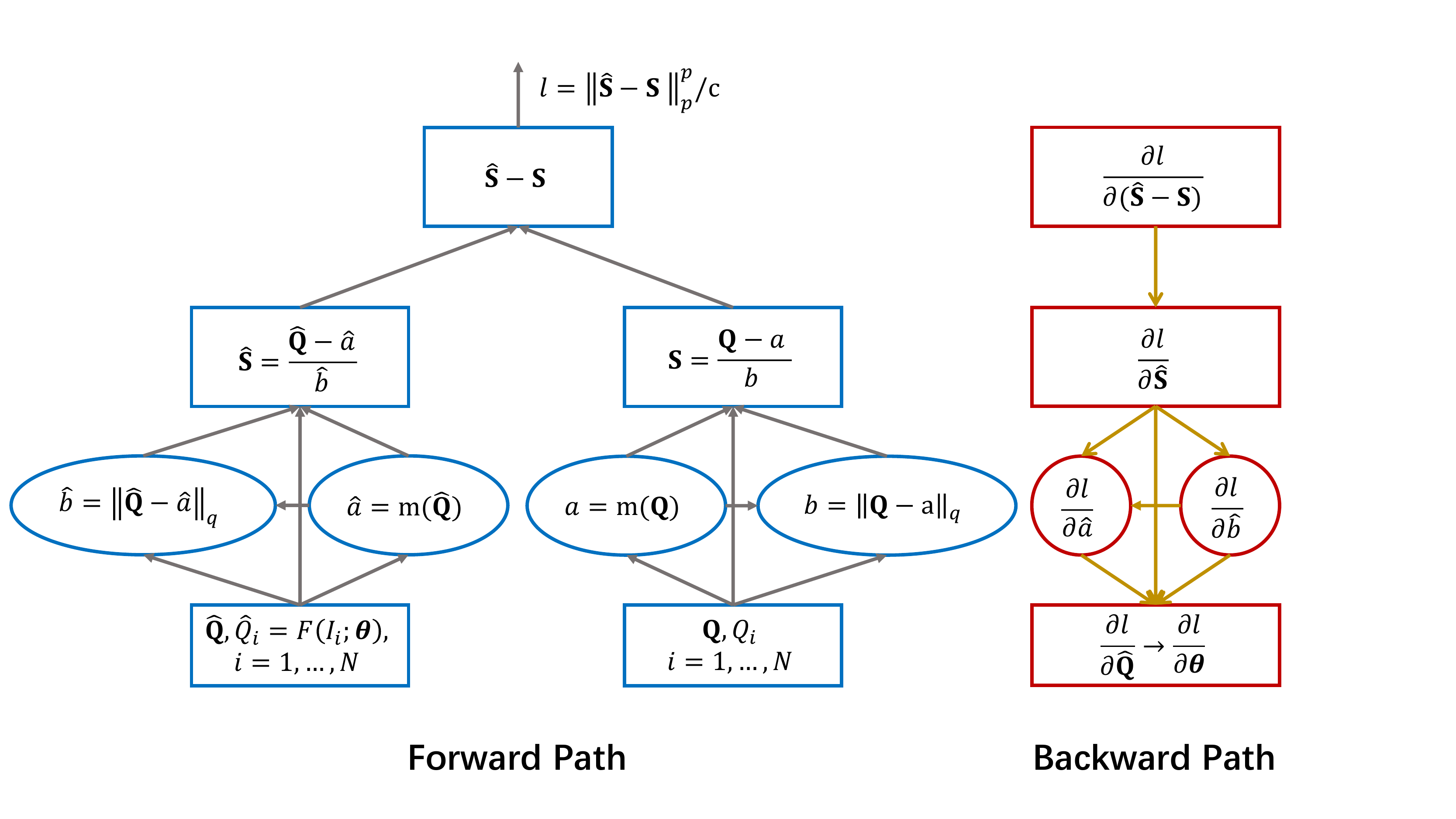}
    \caption{Illustration of the forward and backward paths of the proposed loss. $\mathrm{m}(\cdot)$ denotes the mean function. $F(\cdot; \bm{\theta})$ is the IQA model, where $\bm{\theta}$ represents the model parameters.}
    \label{fig:norm-in-norm loss}
\end{figure}

Specifically, based on Eqn.~(\ref{eq:normalization1}), we have
\begin{equation}\label{eq:grad1}
    \frac{\partial \hat{S}_i}{\partial \hat{Q}_j} = \frac{1}{\hat{b}}\left\{1_{i=j}-\frac{1}{N}-\hat{S}_i\frac{|\hat{S}_j|^q}{\hat{S}_j}+\hat{S}_i\frac{1}{N}\sum_{k=1}^{N}\frac{|\hat{S}_k|^q}{\hat{S}_k}\right\}.
\end{equation}
where $1_{i=j}$ is an indicate function, and it equals 1 if $i=j$, otherwise, 0.

Denote
\begin{equation}\label{eq:rmu}
    \hat{R}_j = \frac{|\hat{S}_j|^q}{\hat{S}_j}, \mu_{\hat{R}} = \frac{1}{N}\sum_{k=1}^{N}\frac{|\hat{S}_k|^q}{\hat{S}_k} = \frac{1}{N}\langle\mathbf{1}, \hat{\mathbf{R}}\rangle.
\end{equation}

Then
\begin{align}\label{eq:grad}
\frac{\partial l}{\partial \hat{Q}_j} = \sum_{i=1}^{N} \frac{\partial l}{\partial \hat{S}_i} \frac{\partial  \hat{S}_i}{\partial \hat{Q}_j} 
= \frac{1}{\hat{b}}\sum_{i=1}^{N} \frac{\partial l}{\partial \hat{S}_i} \left\{1_{i=j}-\frac{1}{N}-\hat{S}_i\hat{R}_j+\hat{S}_i\mu_{\hat{R}}\right\}.
\end{align}

So we get the derivative of $l$ w.r.t. $\hat{\mathbf{Q}}$, \textit{i.e.}, $\frac{\partial l}{\partial \hat{\mathbf{Q}}}$.
\begin{align}
\frac{\partial l}{\partial \hat{\mathbf{Q}}} = \frac{1}{\hat{b}} \left\{\frac{\partial l}{\partial \hat{\mathbf{S}}}-\frac{1}{N}\mathbf{1}^T\left\langle\mathbf{1}, \frac{\partial l}{\partial \hat{\mathbf{S}}}\right\rangle-\hat{\mathbf{R}}^T\left\langle\frac{\partial l}{\partial \hat{\mathbf{S}}},\hat{\mathbf{S}}\right\rangle+\mu_{\hat{R}}\mathbf{1}^T\left\langle\frac{\partial l}{\partial \hat{\mathbf{S}}},\hat{\mathbf{S}}\right\rangle\right\},
\end{align}
which includes four terms. 
The first term is related to $\frac{\partial l}{\partial \hat{\mathbf{S}}}$.
The second term is related to $\frac{\partial l}{\partial \hat{a}}$.
The thrid term is related to $\frac{\partial l}{\partial \hat{b}}$.
And the fourth term is related to both $\frac{\partial l}{\partial \hat{a}}$ and $\frac{\partial l}{\partial \hat{b}}$.

\noindent\textbf{Remark}: The normalization in Eqn.~(\ref{eq:normalization1}) is invariant to linear predictions. 
That is, for any $k_1\hat{Q}+k_2 (k_1^2+k_2^2\neq 0)$, we derive the same $\hat{S}$.
So, we have
\begin{equation}
    l(k_1\hat{\mathbf{Q}}+k_2, \mathbf{Q}) = l(\hat{\mathbf{Q}}, \mathbf{Q}).
\end{equation}
The loss encourages the IQA model to make predictions that are linearly correlated with the subjective quality scores.

\subsection{A Special Case: PLCC-induced Loss}
Pearson's Linear Correlation Coefficient (PLCC), $\rho$, is a criterion for benchmarking IQA models, which is defined as follows.
\begin{equation}\label{eq:plcc}
\rho(\hat{\mathbf{Q}}, {\mathbf{Q}}) =\frac{\sum_{i=1}^N(\hat{Q}_{i}-\hat{a})(Q_{i}-a)}{\sqrt{\sum_{i=1}^N{(\hat{Q}_{i}-\hat{a})}^2\sum_{i=1}^N{(Q_{i}-a)}^2}}.
\end{equation}

In the following, we will prove that PLCC-induced loss $(1-\rho)/2$ is a special case of the ``Norm-in-Norm'' loss functions.

First, when $q$ is set to 2 in the ``Norm-in-Norm'' loss, $\hat{Q_i}\rightarrow\hat{S_i}$ and ${Q_i}\rightarrow{S_i}$ defined in Eqn.~(\ref{eq:normalization1}-\ref{eq:normalization2}) relate to the well-known z-score transformation. 
And $\hat{S_i}$ and ${S_i}$ have the following properties.
\begin{equation}\label{eq:unit}
    \sum_{i=1}^N \hat{S}_{i}^{2}=\sum_{i=1}^N {S}_{i}^{2}=1.
\end{equation}

With the notation of $\hat{S_i}$ and $S_i$, we can reformulate PLCC.
\begin{align}\label{eq:aloss}
\rho(\hat{\mathbf{Q}}, {\mathbf{Q}}) &=\frac{\sum_{i=1}^N(\hat{Q}_{i}-\hat{a})(Q_{i}-a)}{\sqrt{\sum_{i=1}^N{(\hat{Q}_{i}-\hat{a})}^2\sum_{i=1}^N{(Q_{i}-a)}^2}} \nonumber\\
& =\sum_{i=1}^N\frac{\hat{Q}_{i}-\hat{a}}{\hat{b}}\frac{Q_{i}-a}{b}=\sum_{i=1}^N\hat{S}_{i}S_i\nonumber\\
&=1-\frac{1}{2}\sum_{i=1}^N(\hat{S}_{i}-S_i)^2.\ \mbox{[using Eqn.~(\ref{eq:unit})]} \nonumber
\end{align}

When $p$ equals to 2, we have $c=4$ using Eqn.~(\ref{eq:factor}), then we derive the following equation.

\begin{equation}\label{eq:relationship1}
    \rho = 1 - \frac{cl}{2}=1-2l\Longleftrightarrow l = (1-\rho)/2. 
\end{equation}
That is, PLCC-induced loss is equivalent to the ``Norm-in-Norm'' loss $l(\hat{\mathbf{Q}}, \mathbf{Q})$ when the $p, q$ are all set to 2.

\subsection{A Variant and Its Connection to RMSE}
In this subsection, we introduce a variant of the ``Norm-in-Norm'' loss and show its connection to root mean square error (RMSE).
In the end of Section~\ref{sec:loss computation}, we remark that the ``Norm-in-Norm'' loss focuses on training an IQA model to make linear predictions w.r.t. subjective quality scores.
Under this linearity assumption, we can only require the normalized predicted quality scores $\hat{\mathbf{S}}$ and normalized subjective quality scores ${\mathbf{S}}$ to be linearly correlated. 
That is $|\rho(\hat{\mathbf{S}}, {\mathbf{S}})|=1$. 
With this expectation, we can get a variant of the ``Norm-in-Norm'' loss as follows.
\begin{equation}
    l'(\hat{\mathbf{Q}}, \mathbf{Q}) = \frac{1}{c}\sum_{i=1}^{N} {|\rho(\hat{\mathbf{S}}, {\mathbf{S}})\hat{S}_{i}-{S}_{i}|}^{p}.
\end{equation}

\paragraph{Connection to RMSE} 
We apply the least squares regression (LSR) to find the linear mapping between $\hat{Q}_i$ and ${Q}_i$.
\begin{equation}
    Q_i = k_1\hat{Q}_i+k_2, (i=1,\cdots, N),
\end{equation}
where $k_1$ and $k_2$ are two free parameters. 

It is equivalent to solving the following minimization problem.
\begin{align}
     &\min_{k_1,k_2}\sum_{i=1}^N(k_1\hat{Q}_i+k_2-Q_i)^2 \label{eq:LSR} \\
    \Longrightarrow\quad& k_1^*=\frac{\sum_{i=1}^N(\hat{Q}_{i}-\hat{a})(Q_{i}-a)}{\sum_{i=1}^N{(\hat{Q}_{i}-\hat{a})}^2}, \nonumber\\
    & k_2^*=a-k_1^*\hat{a}. \nonumber
\end{align}

We substitute $k_1^*,k_2^*$ into formula~(\ref{eq:LSR}), and can easily get the minimum loss as follows.
\begin{align}
    &\sum_{i=1}^N(k_1^*\hat{Q}_i+k_2^*-Q_i)^2\nonumber\\
    =&\sum_{i=1}^N b^2 |\rho(\hat{\mathbf{S}}, {\mathbf{S}})\hat{S}_{i}-{S}_{i}|^2\nonumber\\
    =& 4 b^2 l',\nonumber
\end{align}
where $p=2, q=2$ are considered in our loss variant $l'$, and $b$ is the centered norm of $\mathbf{Q}$ as defined in Eqn.~(\ref{eq:stats2}). 

Thus, we derive the RMSE between the linearly mapped scores and the subjective quality scores as follows.
\begin{equation}
    \mathrm{RMSE}(k_1^*\hat{\mathbf{Q}}+k_2^*, \mathbf{Q}) = \sqrt{4b^2l'(\hat{\mathbf{Q}}, \mathbf{Q})/N}.
\end{equation}
That is, a special case of the loss variant $l'(\hat{\mathbf{Q}}, \mathbf{Q})$ with $p=2, q=2$ is connected with RMSE --- another criterion for benchmarking IQA models.

\section{Theoretical Analysis}
We introduce the concepts of Lipschitzness and $\beta$-smoothness~\cite{nesterov2013introductory}. 
For a univariate function $f$, $f$ is $L$-Lipschitz if $|f(x_1)-f(x_2)| \le L|x_1-x_2|, \forall x_1, x_2$. 
And $f$ is $\beta$-smooth if its gradient is $\beta$-Lipschitz, \textit{i.e.}, $\beta$-smoothness corresponds to the Lipschitzness of the gradient. 
The proposed loss $l$ is a differentiate multivariate function w.r.t. the model predictions $\hat{\mathbf{Q}}$.
Its Lipschitzness is indicated by its gradient magnitude and its $\beta$-smoothness in the gradient direction is indicated by the quadratic form of its Hessian matrix. 
Smaller gradient magnitude and quadratic form of its Hessian correspond to better Lipschitzness and $\beta$-smoothness, respectively.

In this section, we theoretically prove that, when $q$ equals 2, the embedded normalization can improve the Lipschitzness and the $\beta$-smoothness of the proposed loss $l$. That is, the gradient magnitude and the quadratic form of its Hessian are reduced by the embedded normalization, which indicates the gradients of the proposed loss is more stable and more predictable.
This ensures that the gradient-based algorithm has a smoother loss landscape, so the training of the IQA model gets more robust and the model converges faster.

First, we prove a theorem about Lipschitzness. 
When $q=2$, based on Eqn.~(\ref{eq:rmu}), $\hat{\mathbf{R}}=\hat{\mathbf{S}}, \mu_{\hat{R}}=\frac{1}{N}\langle\mathbf{1}, \hat{\mathbf{S}}\rangle=0$. 
Denote ${\mathbf{g}}=\frac{\partial l}{\partial \hat{\mathbf{S}}}, {\mathbf{g}}_n=\frac{\partial l}{\partial \hat{\mathbf{Q}}}$. 
Thus Eqn.~(\ref{eq:grad}) becomes
\begin{align}
{\mathbf{g}}_n = \frac{1}{\hat{b}} \left\{{\mathbf{g}}-\frac{1}{N}\mathbf{1}^T\left\langle\mathbf{1}, {\mathbf{g}}\right\rangle-\hat{\mathbf{S}}^T\left\langle{\mathbf{g}},\hat{\mathbf{S}}\right\rangle\right\}.
\end{align}

We show that the gradient magnitude of the new loss satisfies Eqn.~(\ref{eq:L}) in Theorem~\ref{theorem1} (See proof in \textbf{Supplemental Materials}~\ref{sec:supp_c}). 

\begin{theorem}[Lipschitzness]\label{theorem1}
When $q$ equals 2, the gradient magnitude of the proposed loss $l$ has the following property.
\begin{align}\label{eq:L}
\left\|{\mathbf{g}}_n\right\|^2 = \frac{1}{\hat{b}^2} \left\{\left\|{\mathbf{g}}\right\|^2-\frac{1}{N}\left\langle\mathbf{1}, {\mathbf{g}}\right\rangle^2-\left\langle{\mathbf{g}},\hat{\mathbf{S}}\right\rangle^2\right\}, 
\end{align}
where the right side contains three terms. 
The first term is directly related to the gradient of the loss w.r.t. the normalized predicted quality scores, \textit{i.e.}, $\left\|{\mathbf{g}}\right\|^2$. 
The second term (non-positive) is contributed by $\frac{\partial l}{\partial \hat{a}}$. 
The third term (non-positive) is contributed by $\frac{\partial l}{\partial \hat{b}}$. 
The contributions of $\frac{\partial l}{\partial \hat{a}}$ and $\frac{\partial l}{\partial \hat{b}}$ are independent.
\end{theorem}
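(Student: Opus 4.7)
The plan is to expand $\|\mathbf{g}_n\|^2$ directly from the closed form already derived above and then exploit the orthogonality built into the $q=2$ normalization. The whole computation reduces to a Pythagorean decomposition, which makes both the identity and the ``independence of contributions'' claim transparent.

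First, I would record the two algebraic facts that are automatic when $q=2$: since $\hat{\mathbf{S}}$ is obtained by centering and then rescaling, $\langle \mathbf{1}, \hat{\mathbf{S}}\rangle = \sum_i \hat{S}_i = 0$, and by Eqn.~(\ref{eq:unit}), $\|\hat{\mathbf{S}}\|^2 = 1$. Together these say that $\mathbf{1}/\sqrt{N}$ and $\hat{\mathbf{S}}$ form an orthonormal pair in $\mathbb{R}^N$. Using $\mu_{\hat{R}} = 0$ and $\hat{\mathbf{R}} = \hat{\mathbf{S}}$ (both stated in the paper just before the theorem), the expression for $\mathbf{g}_n$ can be written geometrically as
\begin{equation*}
\hat{b}\,\mathbf{g}_n \;=\; \mathbf{g} - \frac{\langle \mathbf{1}, \mathbf{g}\rangle}{N}\mathbf{1} - \langle \hat{\mathbf{S}}, \mathbf{g}\rangle\,\hat{\mathbf{S}} \;=\; (I-P)\mathbf{g},
\end{equation*}
where $P$ is the orthogonal projector onto $\operatorname{span}(\mathbf{1}, \hat{\mathbf{S}})$.

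Second, I would take squared norms. Because $I-P$ is an orthogonal projector, $\|(I-P)\mathbf{g}\|^2 = \|\mathbf{g}\|^2 - \|P\mathbf{g}\|^2$, and because $\{\mathbf{1}/\sqrt{N}, \hat{\mathbf{S}}\}$ is orthonormal, Parseval gives $\|P\mathbf{g}\|^2 = \langle\mathbf{1},\mathbf{g}\rangle^2/N + \langle\hat{\mathbf{S}},\mathbf{g}\rangle^2$. Dividing by $\hat{b}^2$ yields exactly Eqn.~(\ref{eq:L}). Non-positivity of the two correction terms is immediate since each is minus a square. The interpretation of the second and third terms as the contributions from $\partial l/\partial \hat{a}$ and $\partial l/\partial \hat{b}$ follows by matching them against the four-term decomposition of $\partial l/\partial \hat{\mathbf{Q}}$ given just after Eqn.~(\ref{eq:grad}), and their independence — i.e.\ the absence of a cross term — is precisely the statement that the two correction directions $\mathbf{1}$ and $\hat{\mathbf{S}}$ are orthogonal.

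I do not expect a real obstacle: the proof is essentially a single application of the Pythagorean theorem once the expression for $\mathbf{g}_n$ is in hand. The only conceptually delicate point — and the one worth emphasizing in the write-up — is that the clean separation into three independent pieces is a consequence of the specific choice $q=2$: it is exactly this value that forces $\mu_{\hat{R}} = 0$ and makes the mean-centering and norm-rescaling corrections orthogonal. For general $q$, the analogue of $\hat{\mathbf{R}}$ would not be orthogonal to $\mathbf{1}$ and an additional cross term would spoil the Pythagorean decomposition, so the proof strategy tells us why the theorem is stated only for $q=2$.
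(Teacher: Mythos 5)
Your proposal is correct and matches the paper's proof in substance: the paper's supplementary argument expands $\|\hat{b}\,\mathbf{g}_n\|^2$ using exactly the two facts $\langle\mathbf{1},\hat{\mathbf{S}}\rangle=0$ and $\|\hat{\mathbf{S}}\|^2=1$, first removing the $\hat{\mathbf{S}}$-component of the centered gradient $\mathbf{Z}=\mathbf{g}-\frac{1}{N}\mathbf{1}\langle\mathbf{1},\mathbf{g}\rangle$ and then expanding $\mathbf{Z}$, which is just the sequential form of your one-shot orthogonal-projection/Parseval computation. Your projector phrasing is a slightly cleaner packaging of the same Pythagorean decomposition and, as you note, makes the absence of a cross term (the ``independence'' claim) and the special role of $q=2$ immediate.
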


The left side of Eqn.~(\ref{eq:L}), $\left\|{\mathbf{g}}_n\right\|^2$, indicates the Lipschitzness with the embedded normalization. 
Without the normalization, the Lipschitzness is indicated by $\left\|{\mathbf{g}}\right\|^2$. 
From Eqn.~(\ref{eq:L}), we derive that the Lipschitzness of the proposed loss is improved whenever the sum of the gradient ${\mathbf{g}}$ deviates from 0 or the gradient ${\mathbf{g}}$ correlates the normalized predicted quality scores $\hat{\mathbf{S}}$. 
In addition, ${\hat{b}}$ is larger than 1 in practice (see \textbf{Supplemental Materials}~\ref{sec:supp_b}), which also contributes to the improvement of the Lipschitzness. 
So from Theorem~\ref{theorem1}, we can infer that the embedded normalization improves the Lipschitzness.

Next, we prove a theorem about $\beta$-smoothness. 
Denote $\mathbf{H}=\frac{\partial^2 l}{\partial \hat{\mathbf{S}}^2}, \mathbf{H}_n=\frac{\partial^2 l}{\partial \hat{\mathbf{Q}}^2}$. 
We then prove that the quadratic form of the loss Hessian in the gradient direction satisfies Eqn.~(\ref{eq:beta}) in Theorem~\ref{theorem2} (The proof is provided in the \textbf{Supplemental Materials}~\ref{sec:supp_d}). 

\begin{theorem}[$\beta$-smoothness]\label{theorem2}
When $q$ equals 2, the Hessian matrix of the proposed loss $l$ has the following property.
\begin{align}
    {\mathbf{g}}_n^T\mathbf{H}_n{\mathbf{g}}_n = \frac{1}{\hat{b}^2} \left\{{\mathbf{g}}_n^T\mathbf{H}{\mathbf{g}}_n-\langle{\mathbf{g}},\hat{\mathbf{S}}\rangle\left\|{\mathbf{g}}_n\right\|^2\right\}.
\end{align}

Further, when $p$ equals 2, we have ${\mathbf{g}} =\frac{2}{c}(\hat{\mathbf{S}}-{\mathbf{S}}), \mathbf{H}=\frac{2}{c}\mathbf{I}$, where $\mathbf{I}$ is the identity matrix of order $N$. 
The above equation becomes
\begin{align}\label{eq:beta}
    {\mathbf{g}}_n^T\mathbf{H}_n{\mathbf{g}}_n = & \frac{1}{\hat{b}^4} \left\{\mathbf{g}^T\mathbf{H}\mathbf{g}\right. \nonumber\\
    & \left.-\frac{2}{c}\left(1-\langle{\mathbf{S}},\hat{\mathbf{S}}\rangle\right)\left[\frac{4}{c^2}\left(1-\langle{\mathbf{S}},\hat{\mathbf{S}}\rangle\right)+\left\|{\mathbf{g}}\right\|^2-\left\langle{\mathbf{g}},\hat{\mathbf{S}}\right\rangle^2\right]\right\}.
\end{align}

% Based on Cauchy inequality, $\langle{\mathbf{S}},\hat{\mathbf{S}}\rangle\le\sqrt{\|\mathbf{S}\|^2\|\hat{\mathbf{S}}\|^2}=1, \langle{\mathbf{g}},\hat{\mathbf{S}}\rangle^2\le\|{\mathbf{g}}\|^2\|\hat{\mathbf{S}}\|^2=\|{\mathbf{g}}\|^2$. So $\left(1-\langle{\mathbf{S}},\hat{\mathbf{S}}\rangle\right)\left(1-\langle{\mathbf{S}},\hat{\mathbf{S}}\rangle+\left\|{\mathbf{g}}\right\|^2-\left\langle{\mathbf{g}},\hat{\mathbf{S}}\right\rangle^2\right)\ge0$. 
% This indicates that the embedded normalization reduces the quadratic form of the loss Hessian. 
% At the meantime, since $\left\langle{\mathbf{g}},\mathbf{1}\right\rangle=0$, $\frac{\partial l}{\partial \hat{a}}$ does not contribute to the reduction and it is all provided by $\frac{\partial l}{\partial \hat{b}}$.
\end{theorem}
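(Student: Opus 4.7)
The plan is to follow the same skeleton that led to Theorem~\ref{theorem1}: differentiate the closed-form expression for $\mathbf{g}_n$ written just above Theorem~\ref{theorem1} one more time with respect to $\hat{\mathbf{Q}}$, contract the resulting Hessian against $\mathbf{g}_n$ on both sides, and heavily exploit the scale/shift invariance of $l$ from the Remark in Section~\ref{sec:loss computation} to kill the bulk of the terms that arise.

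Concretely, the three building blocks I need from the chain rule at $q=2$ are (i) $\partial\hat{b}/\partial\hat{\mathbf{Q}} = \hat{\mathbf{S}}$, which follows from $\hat{b}^2 = \sum_i(\hat{Q}_i-\hat{a})^2$ together with $\sum_i(\hat{Q}_i-\hat{a}) = 0$; (ii) $\mathbf{J} := \partial\hat{\mathbf{S}}/\partial\hat{\mathbf{Q}} = \frac{1}{\hat{b}}P$ with $P := \mathbf{I} - \frac{1}{N}\mathbf{1}\mathbf{1}^T - \hat{\mathbf{S}}\hat{\mathbf{S}}^T$, which is exactly Eqn.~(\ref{eq:grad1}) specialized to $q=2$; and (iii) $\partial\mathbf{g}/\partial\hat{\mathbf{Q}} = \mathbf{H}\mathbf{J}$. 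Rewriting the formula just before Theorem~\ref{theorem1} compactly as $\mathbf{g}_n = \mathbf{J}\mathbf{g}$, the directional derivative in direction $\mathbf{u}$ splits as $\mathbf{H}_n\mathbf{u} = (\partial_{\mathbf{u}}\mathbf{J})\mathbf{g} + \mathbf{J}\mathbf{H}\mathbf{J}\mathbf{u}$.

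Setting $\mathbf{u} = \mathbf{g}_n$ collapses both pieces. The scale/shift invariance of $l$ forces $\langle\mathbf{1},\mathbf{g}_n\rangle = 0$ and $\langle\hat{\mathbf{S}},\mathbf{g}_n\rangle = 0$, so $P\mathbf{g}_n = \mathbf{g}_n$ and therefore $\mathbf{J}\mathbf{g}_n = \mathbf{g}_n/\hat{b}$; this turns the second piece into $\frac{1}{\hat{b}^2}\mathbf{g}_n^T\mathbf{H}\mathbf{g}_n$. In the first piece, $\partial_{\mathbf{g}_n}(1/\hat{b}) = -\langle\hat{\mathbf{S}},\mathbf{g}_n\rangle/\hat{b}^2 = 0$, so only the derivative of $\hat{\mathbf{S}}\hat{\mathbf{S}}^T$ contributes, giving $\partial_{\mathbf{g}_n}\mathbf{J} = -\frac{1}{\hat{b}^2}(\mathbf{g}_n\hat{\mathbf{S}}^T + \hat{\mathbf{S}}\mathbf{g}_n^T)$; contracting with $\mathbf{g}_n$ on the left and $\mathbf{g}$ on the right kills the second summand by orthogonality and leaves $-\frac{1}{\hat{b}^2}\|\mathbf{g}_n\|^2\langle\mathbf{g},\hat{\mathbf{S}}\rangle$. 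Adding the two contributions yields the first identity.

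For the $p=2$ specialization, $l = \frac{1}{c}\|\hat{\mathbf{S}}-\mathbf{S}\|^2$ gives $\mathbf{g} = \frac{2}{c}(\hat{\mathbf{S}}-\mathbf{S})$ and $\mathbf{H} = \frac{2}{c}\mathbf{I}$ by direct differentiation, whence $\mathbf{g}_n^T\mathbf{H}\mathbf{g}_n = \frac{2}{c}\|\mathbf{g}_n\|^2$. Using $\|\hat{\mathbf{S}}\|^2 = 1$ and $\langle\mathbf{1},\hat{\mathbf{S}}\rangle = \langle\mathbf{1},\mathbf{S}\rangle = 0$, one reads off $\langle\mathbf{g},\hat{\mathbf{S}}\rangle = \frac{2}{c}(1-\langle\mathbf{S},\hat{\mathbf{S}}\rangle)$ and $\langle\mathbf{1},\mathbf{g}\rangle = 0$, so Theorem~\ref{theorem1} collapses to $\|\mathbf{g}_n\|^2 = \frac{1}{\hat{b}^2}(\|\mathbf{g}\|^2 - \langle\mathbf{g},\hat{\mathbf{S}}\rangle^2)$. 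Substituting everything into the first identity and pulling out $1/\hat{b}^4$ produces a factored expression of the form $\{\tfrac{2}{c}-\langle\mathbf{g},\hat{\mathbf{S}}\rangle\}\{\|\mathbf{g}\|^2-\langle\mathbf{g},\hat{\mathbf{S}}\rangle^2\}$; recognizing $\frac{2}{c}\|\mathbf{g}\|^2 = \mathbf{g}^T\mathbf{H}\mathbf{g}$ and using $\frac{4}{c^2}(1-\langle\mathbf{S},\hat{\mathbf{S}}\rangle) = \frac{2}{c}\langle\mathbf{g},\hat{\mathbf{S}}\rangle$ to repackage the subtracted bracket matches Eqn.~(\ref{eq:beta}) term by term. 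The main obstacle will be the bookkeeping in the Hessian step: each of the three additive components of $\mathbf{g}_n$ contributes several chain-rule branches through $1/\hat{b}$, $\mathbf{g}$, and the $\hat{\mathbf{S}}\hat{\mathbf{S}}^T$ projector, and it is the systematic use of the orthogonality relations $\langle\mathbf{1},\mathbf{g}_n\rangle = \langle\hat{\mathbf{S}},\mathbf{g}_n\rangle = 0$ that turns what would otherwise be a sprawling calculation into the compact identity above.
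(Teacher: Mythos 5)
Your proposal is correct and follows essentially the same route as the paper's proof: differentiate $\mathbf{g}_n=\frac{1}{\hat{b}}\mathbf{K}\mathbf{g}$ (your $\mathbf{J}\mathbf{g}$ with $\mathbf{K}=P$) through the chain rule, exploit the projector/orthogonality identities $\mathbf{K}\hat{\mathbf{S}}=\mathbf{0}$, $\mathbf{K}^2=\mathbf{K}$ (equivalently $\langle\mathbf{1},\mathbf{g}_n\rangle=\langle\hat{\mathbf{S}},\mathbf{g}_n\rangle=0$), and then specialize to $p=2$ via Theorem~\ref{theorem1}, $\langle\mathbf{g},\hat{\mathbf{S}}\rangle=\frac{2}{c}(1-\langle\mathbf{S},\hat{\mathbf{S}}\rangle)$, and $\langle\mathbf{g},\mathbf{1}\rangle=0$. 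The only difference is organizational: you contract along the direction $\mathbf{g}_n$ from the start rather than writing out the full Hessian $\mathbf{H}_n$ as the paper does, which is a harmless streamlining of the same calculation.
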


The left side of Eqn.~(\ref{eq:beta}), ${\mathbf{g}}_n^T\mathbf{H}_n{\mathbf{g}}_n$, indicates the $\beta$-smoothness with the embedded normalization. 
Without the normalization, the $\beta$-smoothness is indicated by ${\mathbf{g}}^T\mathbf{H}{\mathbf{g}}$. 
From Eqn.~(\ref{eq:beta}), we can see that the $\beta$-smoothness is improved when the normalized subjective quality scores ${\mathbf{S}}$ and normalized predicted quality scores $\hat{\mathbf{S}}$ are not linearly correlated ($\langle{\mathbf{S}},\hat{\mathbf{S}}\rangle<1$). 
And it is further improved if the gradient ${\mathbf{g}}$ and the normalized predicted quality scores $\hat{\mathbf{S}}$ are also not linearly correlated ($\langle{\mathbf{g}},\hat{\mathbf{S}}\rangle<\|{\mathbf{g}}\|$). 
In addition, ${\hat{b}}>1$ also contributes to the improvement of the $\beta$-smoothness. 
So from Theorem~\ref{theorem2}, we can infer that the embedded normalization improves the $\beta$-smoothness.

\section{Verification on Quality Assessment of In-the-Wild Images}
\label{sec:verification}
Besides theoretical analysis, we also conduct an experimental verification.
Quality assessment of in-the-wild images is important for many real-world applications, but few attention is paid to it.
In-the-wild images contain lots of unique contents and complex distortions.
The greatest challenge for this problem is how to handle the content dependency and distortion complexity.
In this section, we pick this challenging problem for verifying the effectiveness of our ``Norm-in-Norm'' loss in comparison with MAE loss and MSE loss. 
In addition, we intend to provide state-of-the-art prediction performance on this challenging problem.

\begin{figure}[!t]
    \centering
    \includegraphics[width=.898\linewidth]{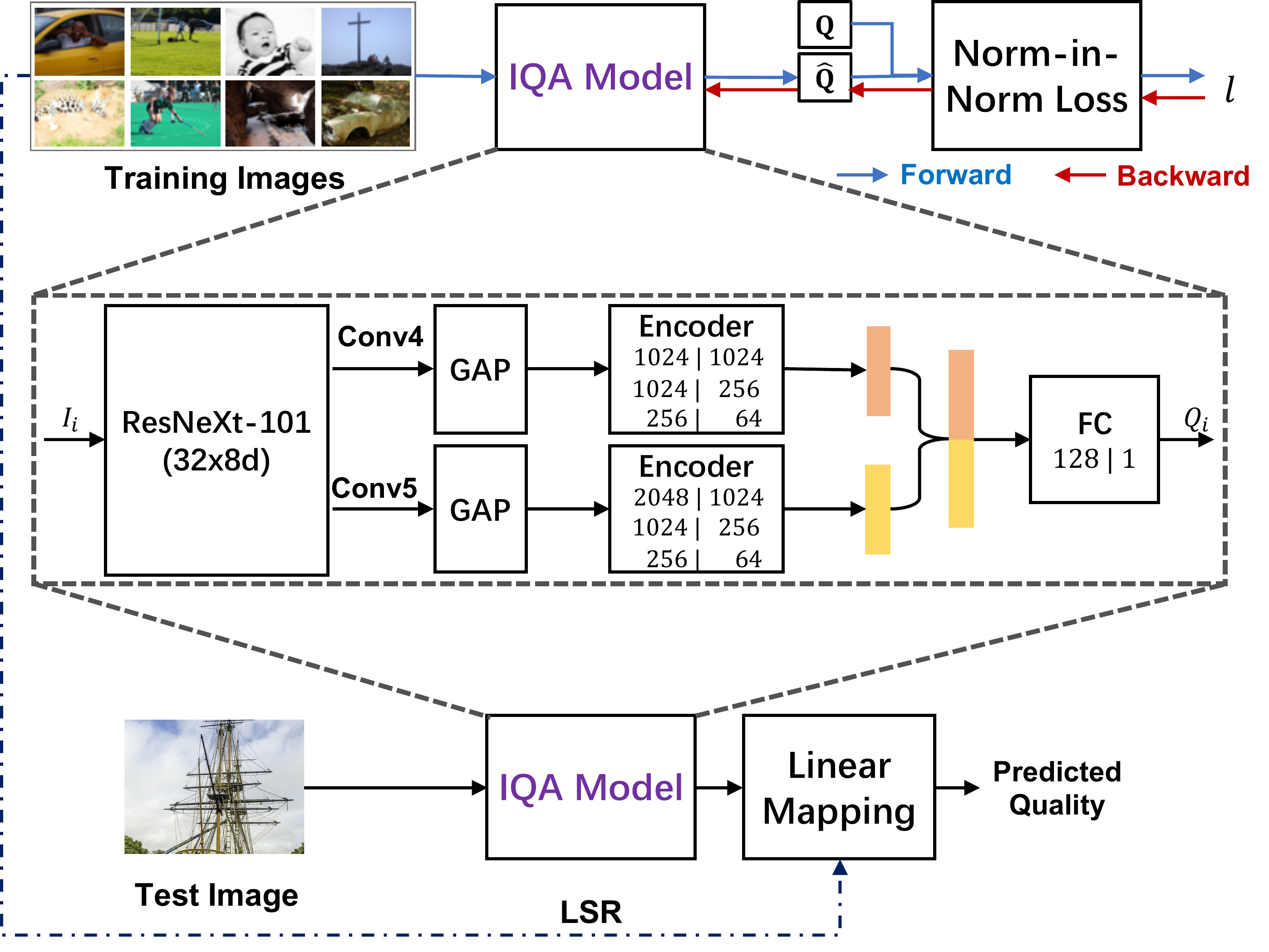}
    \caption{The framework for quality assessment of in-the-wild images. Note that $C_1 |C_2$ denotes a fully-connected (FC) layer. BN and ReLU are omitted in the illustration.}
    \label{fig:framework}
\end{figure}

\subsection{IQA Framework}
\label{sec:framework}

Our IQA framework for in-the-wild images is shown in Figure~\ref{fig:framework}.
We introduce a model that extracts deep pre-trained features for tackling content dependency and fuses multi-level features for handling the distortion complexity. 
Specifically, it first extracts multi-level deep feature extraction from an image classification backbone (\textit{e.g.}, 32x8d ResNeXt-101~\cite{xie2017aggregated}).
Then feature aggregation is achieved by global average pooling (GAP).
Next, features are encoded by an encoder with three fully-connected (FC) layers, where each FC layer is followed by a batch normalization (BN)~\cite{ioffe2015batch} layer and a ReLU activation function. 
After that, the IQA model concatenates the encoded features from different levels, and the concatenated features are finally mapped to the output by an FC layer. 
After training the IQA model, to determine the linear mapping from the predictions to the subjective quality scores, the least squares regression (LSR) method is applied on the whole training set. 
In the testing phase, based on the learned linear mapping, the prediction of the IQA model is linearly mapped to produce the final predicted quality score for a test image.

\subsection{Experimental Setup}
\label{sec:experiments}
We conduct experiments on two benchmark datasets: CLIVE~\cite{ghadiyaram2016massive} and KonIQ-10k~\cite{hosu2019koniq}.
We follow the same experimental setup as described in~\cite{hosu2019koniq}.
KonIQ-10k contains 10073 images, and they are divided into three sets: a training set (7058 images), a validation set (1000 images), and a test set (2015 images).
We train our model on the training set of KonIQ-10k, save the best performed model on the validation set of KonIQ-10k in terms of Spearman's Rank-Order Correlation Coefficient (SROCC).
We report the SROCC, PLCC, and RMSE values on the test set of KonIQ-10k for prediction performance evaluation.
CLIVE includes 1162 images, and it is used for cross-dataset evaluation.

\paragraph{Implementation Details}
The input images is resized $664\times 498$.
The backbone models for multi-level feature extraction are chosen from ResNet-18, ResNet-34, ResNet-50~\cite{he2016deep}, and ResNeXt-101~\cite{xie2017aggregated} pre-trained on ImageNet~\cite{deng2009imagenet}. 
And the features are extracted from the stage ``conv4'' and stage ``conv5'' of the backbone.
To explicitly encode the features at each level to a task-aware feature space, we add auxiliary supervision to the encoded feature at each level. 
That is, the encoded feature is directly followed by a single FC layer to output the image quality score. 
Thus, beside the main stream loss, we get another two streams of losses.
The final training loss is a weighted average of the three loss values, where the weight hyper-parameters for the main stream loss and the other two streams of losses set to 1, 0.1, 0.1, respectively. 
We train the model with NVIDIA GeForce RTX 2080 Ti GPU using Adam optimizer for 30 epochs, where the learning rate drops to its 1/10 every 10 epochs.
The initial learning rate is chosen from 1e-3, 1e-4, and 1e-5.
The batch size varies from 4, 8, 16. 
And the ratio between the learning rate of the backbone's parameters and of the other parameters, denoted as ``fine-tuned rate'', is selected from 0, 0.01, 0.1, and 1.
By default, we use an initial learning rate 1e-4, batch size 8, and fine-tuned rate 0.1.
The default values for hyper-parameters $p$ and $q$ in the ``Norm-in-Norm'' loss are 1 and 2, respectively.
The proposed model is implemented with PyTorch~\cite{paszke2019pytorch}.
To support reproducible scientific research, we have released our code at \url{https://github.com/lidq92/LinearityIQA}.

\subsection{Results and Analysis}
In this subsection, we show the experimental results and verify the proposed loss in different aspects. 

\subsubsection{Model Convergence With MAE, MSE, or the Proposed Loss}
\label{sec:loss-exp}
In this experiment, $p,q$ in the proposed loss are set to $1,2$, and we adopt the backbone ResNeXt-101 for models trained with all losses. 
The training/validation/testing curves on KonIQ-10k are shown in Figure~\ref{fig:loss}. 
Looking at the circle markers, to reach the prediction performance indicated by the grey dash line, MAE and MSE are empirically about ten times slower than the proposed loss.  
For MAE or MSE loss, to achieve a comparable prediction performance with the proposed loss, the models need to be trained with much more time.
And the final state of the convergence also indicates that our proposed loss achieves better prediction performance (higher SROCC/PLCC and lower RMSE) than MAE loss and MSE loss.
We experimentally conclude that the model trained with our proposed loss converges faster and better than that with MAE or MSE loss.

\begin{figure}[!htb]
    \centering
    \includegraphics[width=.75\linewidth]{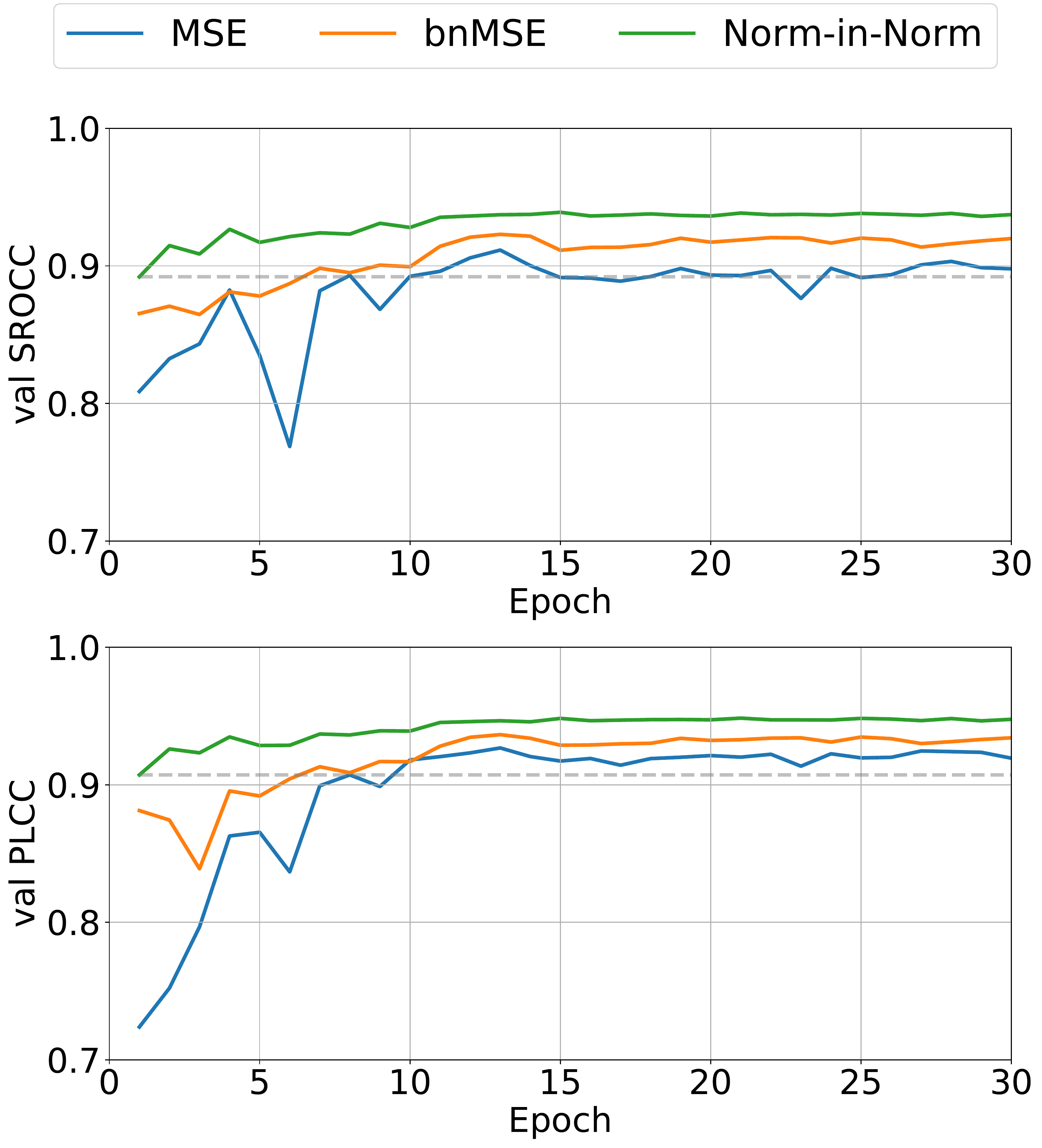}
    \caption{The validation curves on KonIQ-10k}
    \label{fig:bnMSE}
\end{figure}

Our method may look similar to adding a BN layer to the output of the current model. 
Thus, our method is also compared with ``bnMSE'', where a BN layer is added to the output of the model and the model is trained with MSE loss. 
Figure~\ref{fig:bnMSE}~shows the validation curves on KonIQ-10k. 
When compared to MSE, ``bnMSE'' leads to faster convergence and better performance. 
However, it is worse than the proposed ``Norm-in-Norm''. 
This is because the learned linear relationship in ``bnMSE'' is based on the cumulation of the batch-sample statistics, which is not accurate at the beginning.
Thus, it will slow down the convergence and somehow disturb the learning process.
On the contrast, the proposed method separates the network learning and the learning of the linear relationship, where the network first focuses on making linear predictions and then LSR is applied on the whole training set to determine a more accurate linear relationship.
Besides, it should be noted that, unlike ``bnMSE'', the proposed method does not change the architecture and it also normalizes subjective quality scores.

\begin{figure*}[!htb]
    \centering
    \subfloat[Different initial learning rates]{\includegraphics[width=.68\linewidth]{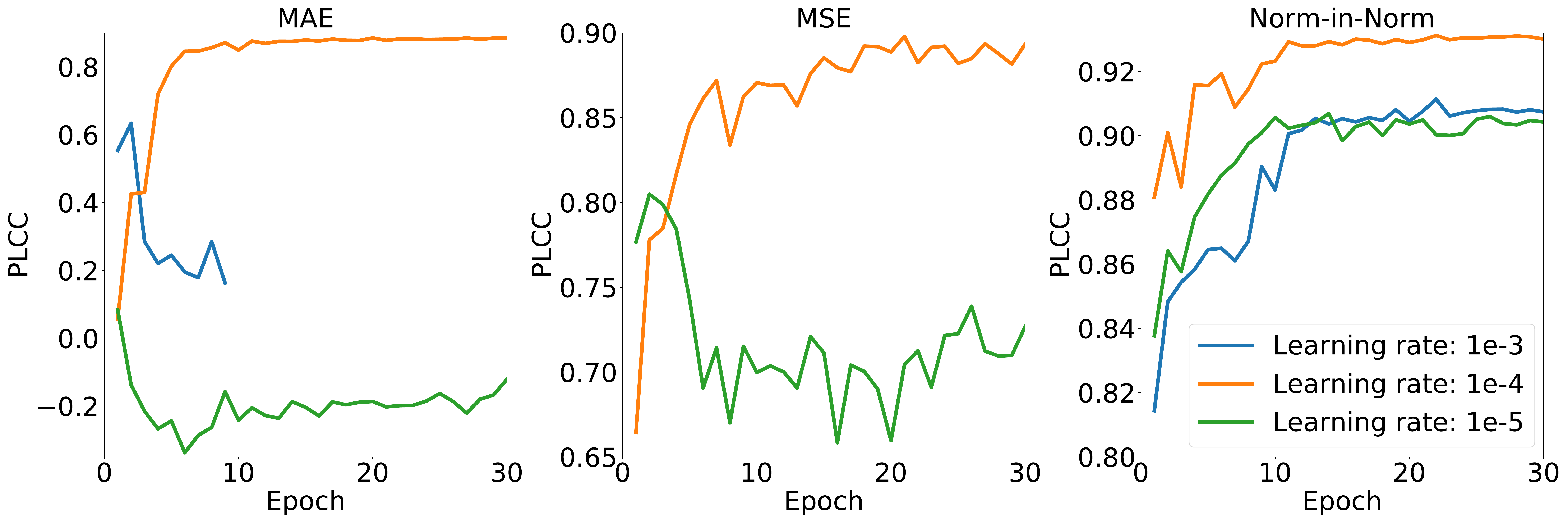}\label{fig:learningrate}}
    \hfill
    \subfloat[Different batch sizes]{\includegraphics[width=.68\linewidth]{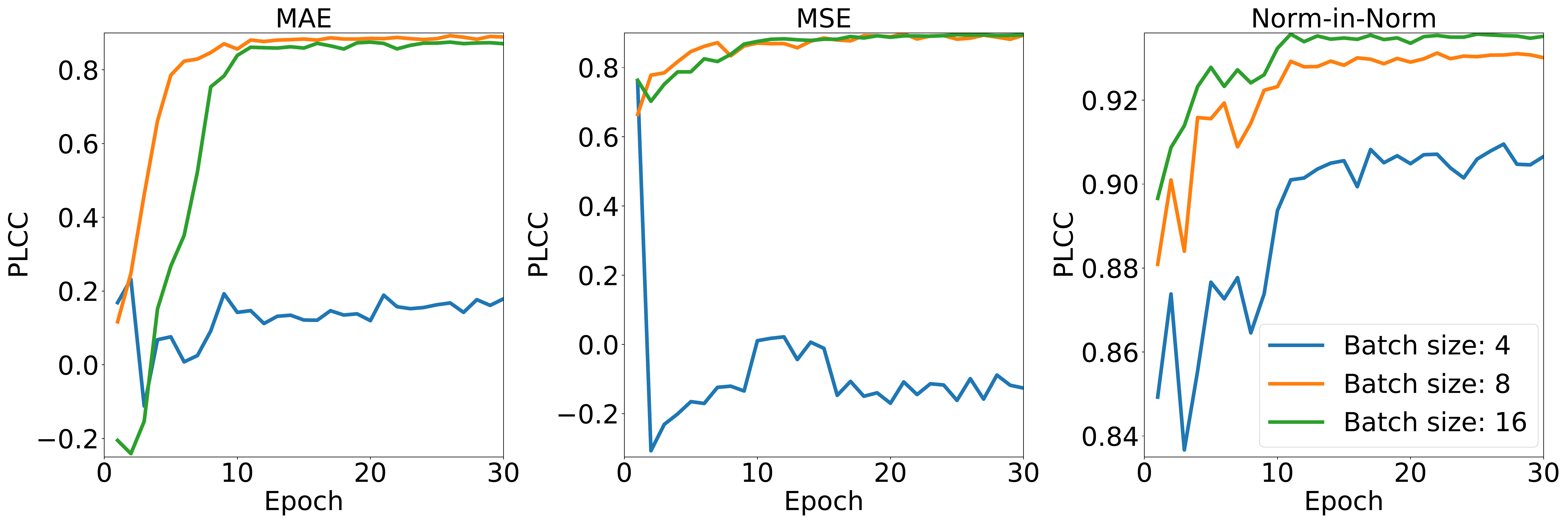}\label{fig:batchsize}}
    \hfill
    \subfloat[Different fine-tuned rates]{\includegraphics[width=.68\linewidth]{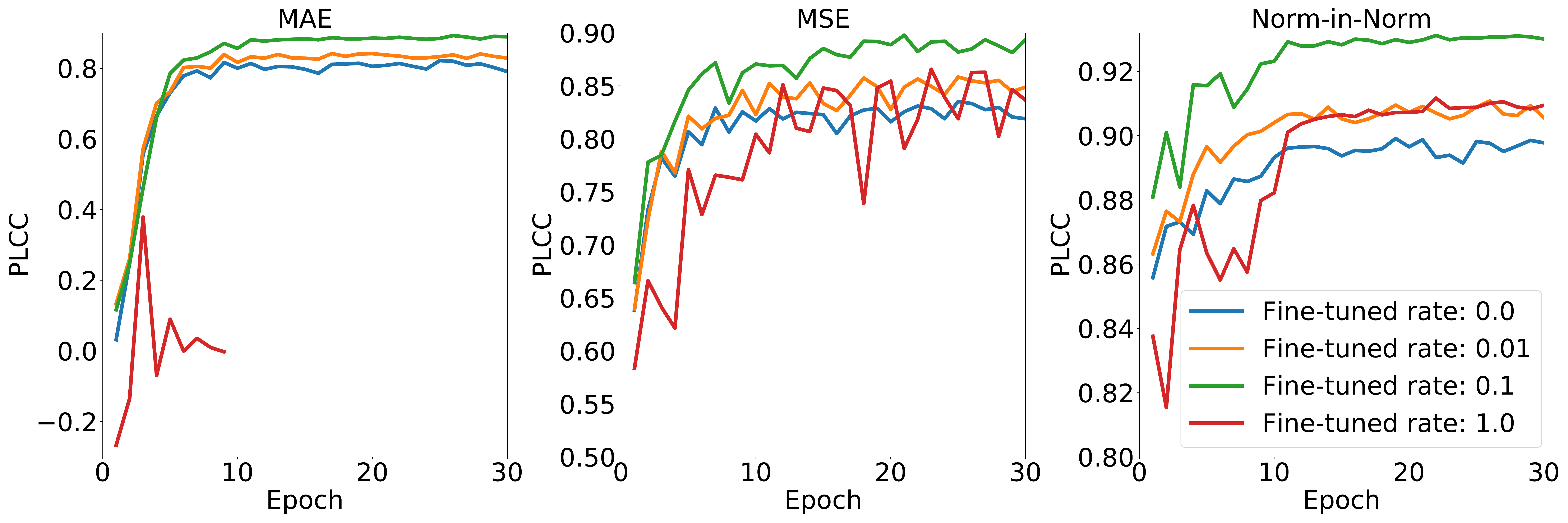}\label{fig:finetunedrate}}
    \caption{PLCC curves on KonIQ-10k validation set using models trained with MAE, MSE, or ``Norm-in-Norm'' loss. The incomplete curves indicate that the training process is stopped due to the encounter of NaNs or Infs.}
    \label{fig:sensitivity}
\end{figure*}

\begin{table}[!htb]
\centering
\caption{PLCC on KonIQ-10k test set under different $p,q$}
\label{tab:pq}
%\resizebox{.9\linewidth}{!}{
\begin{tabular}{lcccc}  
\toprule
\multirow{2}{*}{Backbone}  & $p=1$ & $p=1$ & $p=2$ & $p=2,q=2$, \textit{i.e.} \\ 
& $q=1$ & $q=2$ & $q=1$ & PLCC-induced loss \\
\midrule
ResNet-18   & 0.916 & \textbf{0.918} & 0.912 & 0.913 \\
ResNet-34   & 0.924 & \textbf{0.926} & 0.923 & 0.924 \\
ResNet-50   & 0.928 & \textbf{0.930} & 0.927 & 0.929 \\
ResNeXt-101 & 0.946 & \textbf{0.947} & 0.944 & 0.945 \\
\bottomrule
\end{tabular}
%}
\end{table}

\subsubsection{Effects of $p, q$ in the ``Norm-in-Norm'' Loss} 
In this experiment, we explore the effect of $p,q$ in the proposed loss functions.
We consider four choices: $p=1,q=1$, $p=1,q=2$, $p=2,q=1$, and $p=2,q=2$ (\textit{i.e.}, PLCC-induced loss).
The PLCC values on KonIQ-10k test set for different $p,q$ under different backbones are shown in Table~\ref{tab:pq}.
We can see that $p=1$ is generally better than $p=2$.
This can be explained by the fact that the loss with $p=2$ is more sensitive to the outliers than the loss with $p=1$.
Although the loss with $p=1,q=1$ is slightly inferior to the loss with $p=1,q=2$, $L^1$ normalization (\textit{i.e.}, $q=1$) may improve numerical stability in low-precision implementations as pointed out in~\cite{hoffer2018norm}.
Besides, we now only focus on the task of quality assessment for in-the-wild images, and the results shows that $p=1, q=2$ is the best choice for KonIQ-10k.
However, the best $p,q$ may be task-dependent and dataset-dependent.
It deserves a further study on how to adaptively determine these hyper-parameters in a probability framework, just like the study described in~\cite{barron2019general}.
\subsubsection{Training Stability With MAE, MSE, or the Proposed Loss Under Different Learning Hyper-parameters} 
In this experiment, we consider use ResNet-50 as the backbone model, and vary the default initial learning rate, batch size, and fine-tuned rate to see the training stability under these hyper-parameters.
The validation curves on KonIQ-10k are shown in Figure~\ref{fig:sensitivity}. 
We can see that training models with MAE or MSE loss are unstable when varying the learning rates, batch sizes.
And training the model with MAE loss is unstable when fine-tuned rate is too large.
Compared to MAE loss and MSE loss, the ``Norm-in-Norm'' loss is more stable under different choices.
For all losses, the best results are achieved under an initial learning rate 1e-4 and a fine-tuned rate 0.1.
However, our ``Norm-in-Norm'' loss can achieve a better validation performance under batch size 16 than under batch size 8, while the MAE or MSE loss does not.
This is because our loss is a batch-correlated loss, and a larger batch size may lead to a more accurate estimation of the sample statistics.
For fair comparison, in other experiments, we also use the default batch size, \textit{i.e.}, 8, for the proposed loss.

\begin{figure}[!htb]
    \centering
    \subfloat[PLCC on KonIQ-10k test set]{\includegraphics[width=.75\linewidth]{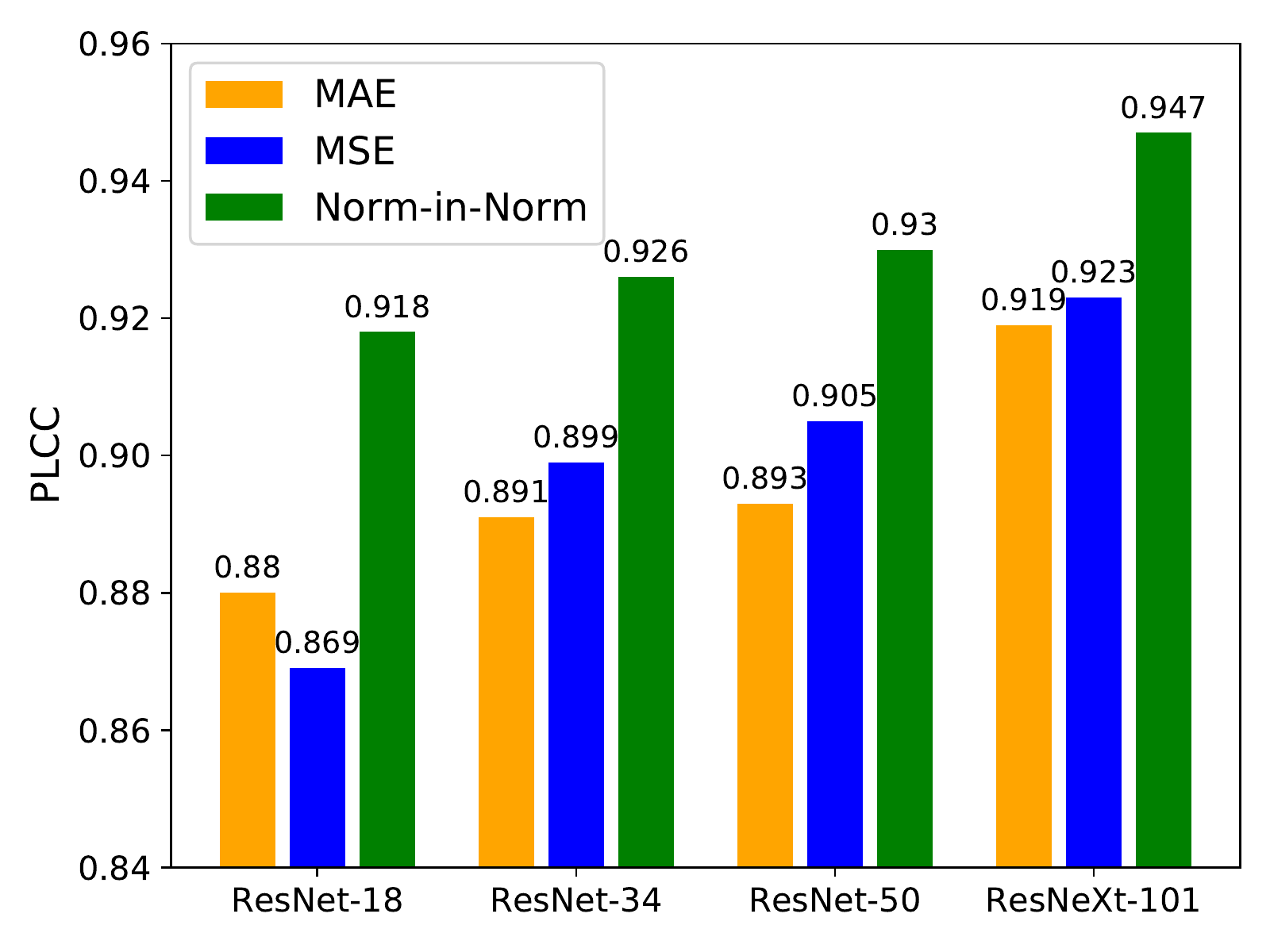}\label{fig:backbone on KonIQ-10k}}
    \hfill
    \subfloat[SROCC on CLIVE]{\includegraphics[width=.75\linewidth]{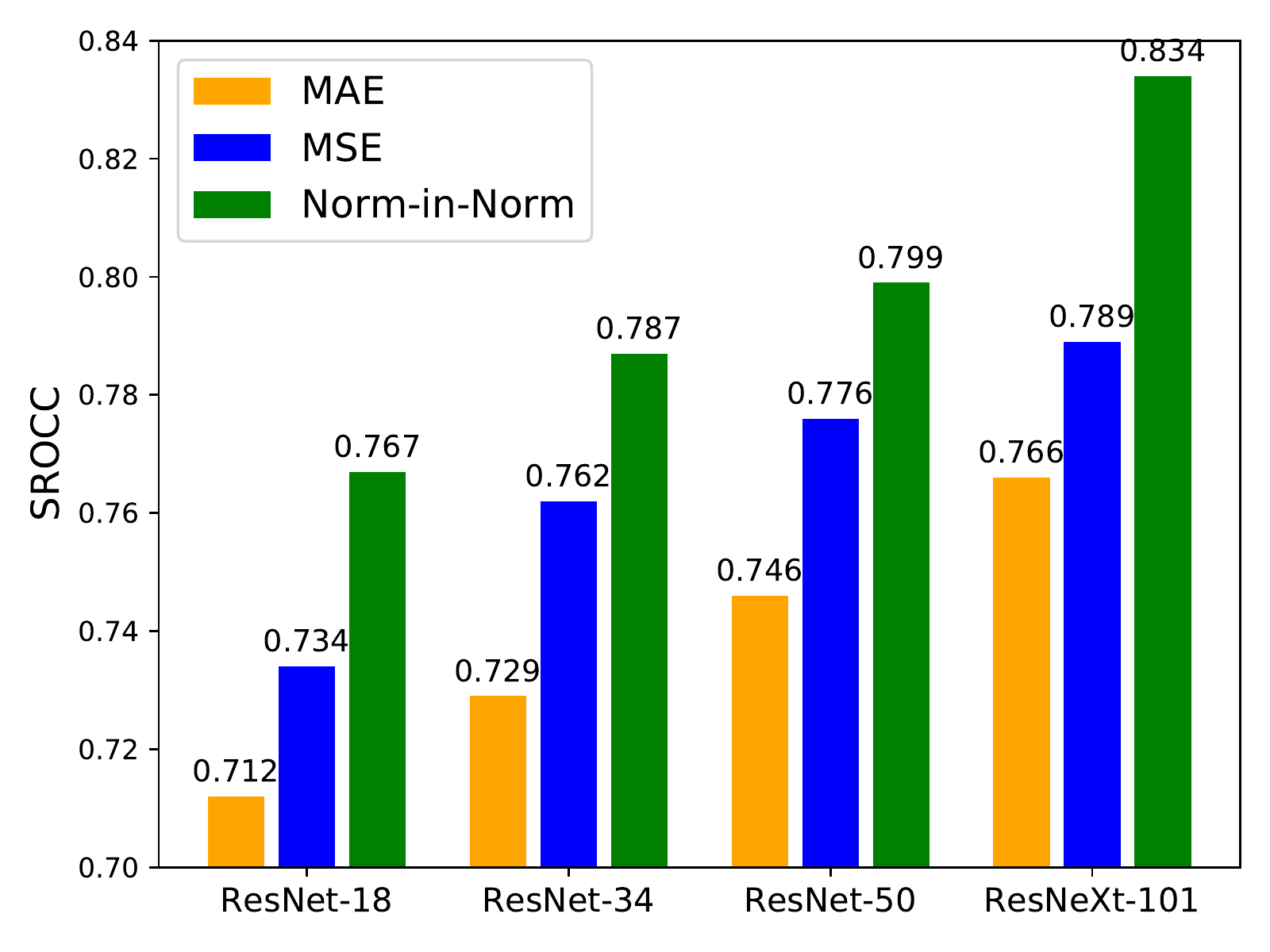}\label{fig:backbone on CLIVE}}
    \caption{Performance under different backbone models}
    \label{fig:backbone}
\end{figure}

\subsubsection{Performance Consistency for MAE, MSE, or the Proposed Loss Among Different Backbone Architectures} 
\label{sec:4.3.4}
In this experiment, we consider ResNet-18, ResNet-34, ResNet-50, and ResNeXt-101 as the backbone architectures, and train the models with MAE, MSE, and the proposed loss. 
The PLCC on KonIQ-10k test set and the SROCC on CLIVE are shown in Figure~\ref{fig:backbone}.
It can be seen that our proposed loss consistently achieves the best prediction performance under different backbone architectures.
The scatter plots between the predicted quality scores by the models using backbone ResNeXt-101 and MOSs on KonIQ-10k test set are shown in Figure~\ref{fig:scatters}.
The scatter points of the model with our loss are more centered in the diagonal line, which means a better prediction of image quality.

\begin{figure}[!htb]
    \centering
    \includegraphics[width=.9\linewidth]{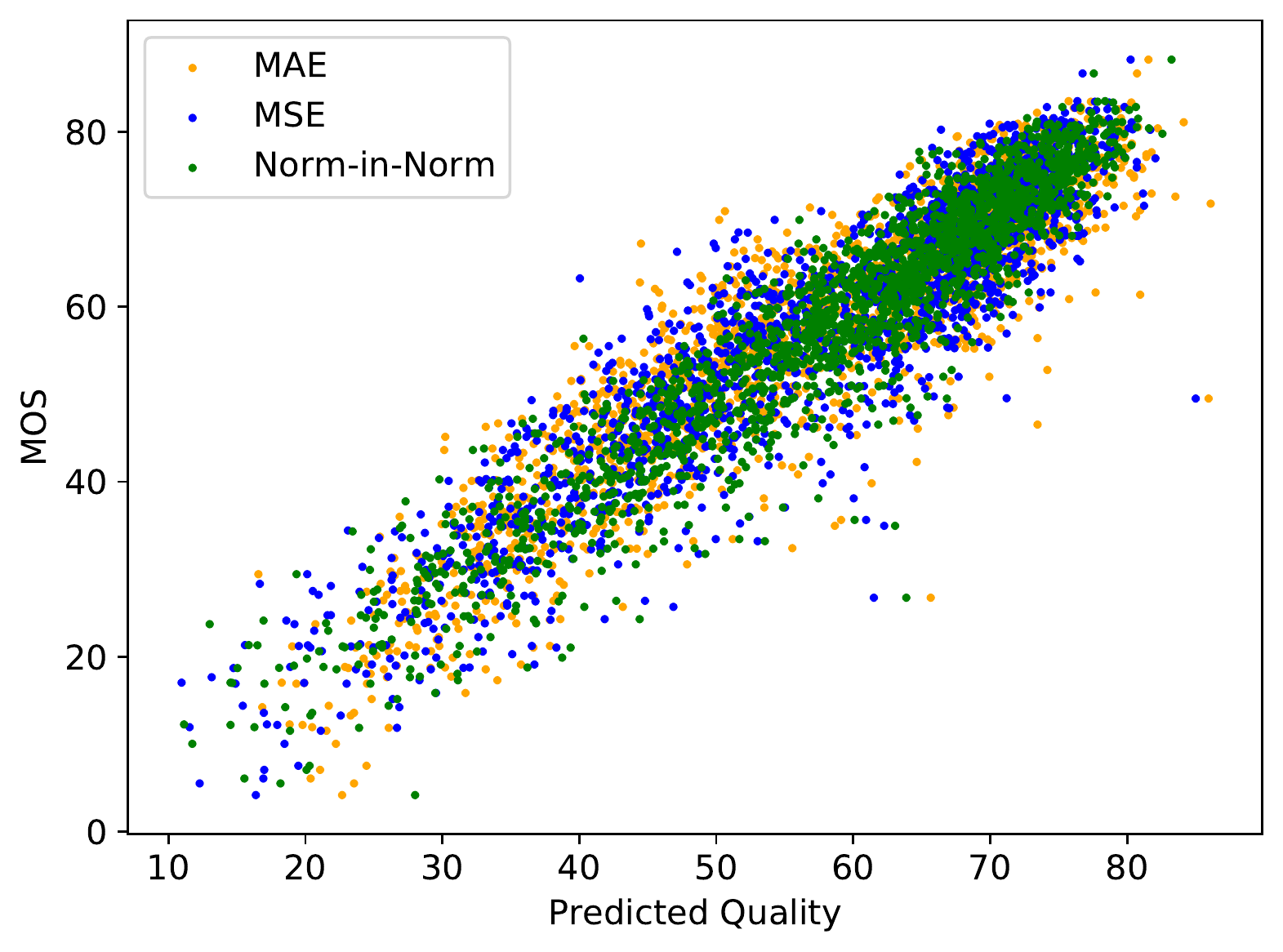}
    \caption{Scatter plots between the predicted quality and the MOS on KonIQ-10k test set}
    \label{fig:scatters}
\end{figure}

\begin{table}[!htb]
\centering
\caption{Performance comparison with SOTA on KonIQ-10k test set and the whole CLIVE}
\label{tab:SOTA}
\resizebox{\linewidth}{!}{
\begin{tabular}{llcccc}  
\toprule
\multirow{2}{*}{Model}  & \multirow{2}{*}{Loss} & \multicolumn{2}{c}{KonIQ-10k} &  \multicolumn{2}{c}{CLIVE} \\
& & SROCC & PLCC & SROCC & PLCC \\
\midrule
BRISQUE (TIP'12)   & SVR loss & 0.705  & 0.707 & 0.561  & 0.598      \\
CORNIA (CVPR'12)   & SVR loss & 0.780  & 0.808 & 0.621  & 0.644      \\
HOSA (TIP'16)      & SVR loss & 0.805  & 0.828 & 0.628  & 0.668      \\
DeepBIQ (SIViP'18) & SVR loss & 0.872 & 0.886   & 0.742 & 0.747      \\
\midrule
CNNIQA (CVPR'14)   & MAE loss & 0.572  & 0.584 & 0.465  & 0.450      \\
DeepRN (ICME'18) & Huber loss & 0.867  & 0.880 & 0.720  & 0.750      \\
KonCept512 (TIP'20) & MSE loss & 0.921 & 0.937   & 0.825 & 0.848      \\
\midrule
\multirow{2}{*}{Proposed} & $l$ & \textbf{0.937} & \textbf{0.947} & \textbf{0.834} & \textbf{0.849} \\
 & $l+0.1l'$ & \textbf{0.938} & \textbf{0.947} & \textbf{0.836} & \textbf{0.852} \\
\bottomrule
\end{tabular}
}
\end{table}

\subsection{Performance Comparison with SOTA} 
In this part, we compare our final model with the state-of-the-art (SOTA) models, \textit{i.e.}, BRISQUE~\cite{mittal2012no}, CORNIA~\cite{ye2012unsupervised}, HOSA~\cite{xu2016blind}, DeepBIQ~\cite{bianco2018use}, CNNIQA~\cite{kang2014convolutional}, DeepRN~\cite{varga2018deeprn}, and KonCept512~\cite{hosu2019koniq}.
The first three models map handcrafted features to image quality by SVR.
The fourth model maps the fine-tuned deep features to image quality by SVR.
The last three deep learning-based models adopt MAE, MSE, or their variant Huber loss for network training. 
And the results of these models are taken from Hosu et al.~\shortcite{hosu2019koniq}, while our results are obtained in the same setting.
From Table~\ref{tab:SOTA}, we can see that our model outperforms classic and current deep learning-based models. 
We note that by combining the loss and its variant with a weight of 1 and 0.1, our model can even achieve better results, \textit{e.g.}, SROCC values are 0.938 and 0.836 on KonIQ-10k test set and CLIVE, respectively.

\section{Conclusion}
\label{sec:conclusion}
Realizing that most IQA methods train models using MAE or MSE loss with empirically slow convergence, we address this problem by designing a class of loss functions with normalization.
The proposed loss includes Pearson correlation-induced loss as a special case.
And a special case of the loss variant is connected with RMSE between the linearly mapped predictions and the subjective ratings.
We theoretically prove that the embedded normalization helps to improve the smoothness of the loss landscape.
Besides, experimental verification of the proposed loss is conducted on the quality assessment of in-the-wild images.
Results on two benchmark datasets (KonIQ-10k and CLIVE) show that the model converges faster and better with the proposed loss than that with MAE loss and MSE loss.

The proposed loss is invariant to the scale of subjective ratings.
Facilitated with the new loss, we can easily mix multiple datasets with different scales of subjective ratings for training a universal IQA model.
In the future study, we intend to verify the effectiveness of this new loss in the universal image and video quality assessment problems.
Besides, it is a future direction on how to optimally choose the hyper-parameters $p$ and $q$ in the class of proposed loss functions for a specific regression task.

\begin{acks}
This work was partially supported by the National Natural Science Foundation of China under contracts 61572042, 61527804 and 61520106004. We also acknowledge High-Performance Computing Platform of Peking University for providing computational resources.
\end{acks}

\bibliographystyle{ACM-Reference-Format}
\bibliography{mmfp0612}

\appendix
% In the supplementary materials, we provide the derivation of $c$ in Eqn.~(6), the $\hat{b}$ curve in our experiment, the proof of Theorem 3.1 (Lipschitzness), the proof of Theorem 3.2 ($\beta$-smoothness), and some additional results.

\section{Derivation of $c$ in Eqn.~(6)}
\label{sec:supp_a}
\begin{lemma}
When $\mathbf{x}=(x_1,\cdots,x_N), 0< p_1\le p_2<+\infty$, we have the following norm inequality.
\begin{equation}
\|{\mathbf{x}}\|_{p_2}\le\|{\mathbf{x}}\|_{p_1}\le N^{\frac{1}{p_1}-\frac{1}{p_2}}\|{\mathbf{x}}\|_{p_2}
\end{equation}
\end{lemma}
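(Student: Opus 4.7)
The plan is to establish the two inequalities separately, since they use different techniques. For the left inequality $\|\mathbf{x}\|_{p_2} \le \|\mathbf{x}\|_{p_1}$, I would use a normalization argument. First, I would note the case $\mathbf{x}=\mathbf{0}$ is trivial and otherwise rescale by $\|\mathbf{x}\|_{p_1}$, so it suffices to prove the inequality under the assumption $\|\mathbf{x}\|_{p_1}=1$. This immediately forces $|x_i|\le 1$ for every $i$, and since $p_2\ge p_1$, the pointwise comparison $|x_i|^{p_2}\le |x_i|^{p_1}$ holds. Summing over $i$ yields $\|\mathbf{x}\|_{p_2}^{p_2}\le \|\mathbf{x}\|_{p_1}^{p_1}=1$, and taking the $1/p_2$-th power gives the desired bound.

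For the right inequality $\|\mathbf{x}\|_{p_1}\le N^{1/p_1-1/p_2}\|\mathbf{x}\|_{p_2}$, I would apply H\"older's inequality with a careful choice of exponents. Writing $|x_i|^{p_1}=|x_i|^{p_1}\cdot 1$ and choosing the conjugate pair $r=p_2/p_1\ge 1$ and $r'=r/(r-1)$ (handling $p_1=p_2$ as a trivial equality), H\"older yields
\begin{equation}
\sum_{i=1}^N |x_i|^{p_1}\le \left(\sum_{i=1}^N |x_i|^{p_2}\right)^{p_1/p_2}\left(\sum_{i=1}^N 1\right)^{1-p_1/p_2} = \|\mathbf{x}\|_{p_2}^{p_1}\cdot N^{1-p_1/p_2}.
\end{equation}
Taking $1/p_1$-th powers then gives exactly the claimed inequality with the constant $N^{1/p_1-1/p_2}$.

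Neither direction presents a serious obstacle: both are classical consequences of the concavity/convexity behavior of $t\mapsto t^s$ on $[0,\infty)$, made rigorous through normalization in one case and H\"older's inequality in the other. The only subtlety worth flagging is the boundary case $p_1=p_2$, where the exponent $r$ degenerates and H\"older must be replaced by a trivial equality; this should be dispatched as a separate sentence before invoking H\"older. Once the lemma is in place, the formula for $c$ in Eqn.~(6) follows by applying it to $\mathbf{x}=\hat{\mathbf{S}}-\mathbf{S}$ together with Minkowski's inequality on the decomposition $\hat{\mathbf{S}}-\mathbf{S}$, splitting on whether $p<q$ (in which case the factor $N^{1-p/q}$ appears from the lemma) or $p\ge q$ (in which case the lemma points the other way and only the factor $2^p$ from Minkowski survives).
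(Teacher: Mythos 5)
Your proof is correct and takes essentially the same route as the paper's: the left inequality is the same normalization-plus-pointwise-power-comparison argument (you normalize by $\|\mathbf{x}\|_{p_1}$, while the paper normalizes $y_i=|x_i|^{p_2}$ by their sum, but both rest on the elementary fact that raising numbers in $[0,1]$ to a smaller exponent increases them), and the right inequality is the identical application of H\"older's inequality with exponents $p_2/p_1$ and its conjugate. Your explicit handling of the degenerate case $p_1=p_2$, where the conjugate exponent blows up and H\"older reduces to a trivial equality, is a small point the paper glosses over but does not affect the argument.
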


\begin{proof}
We will separately prove the left part and the right part.

1. Proof of the left part.

Denote $y_i=|x_i|^{p_2}\ge0$. We have ${y_i}^{p_1/p_2}=|x_i|^{p_1}$, $0<p_1/p_2\le1$, and $0\le\frac{y_i}{\sum_{j=1}^N y_i}\le1$. Then
\begin{equation}
\sum_{i=1}^N \left(\frac{y_i}{\sum_{j=1}^N y_i}\right)^{p_1/p_2} \ge \sum_{i=1}^N {\frac{y_i}{\sum_{j=1}^N y_i}}=1
\end{equation}

That is
\begin{equation}
\sum_{i=1}^N {y_i}^{p_1/p_2} \ge \left(\sum_{i=1}^N {y_i}\right)^{p_1/p_2}
\end{equation}
\begin{equation}
\sum_{i=1}^N {|x_i|}^{p_1} \ge \left(\sum_{i=1}^N {|x_i|^{p_2}}\right)^{p_1/p_2}
\end{equation}
\begin{equation}
\left(\sum_{i=1}^N {|x_i|}^{p_1}\right)^{1/p_1} \ge \left(\sum_{i=1}^N {|x_i|^{p_2}}\right)^{1/p_2}
\end{equation}
\begin{equation}
\|{\mathbf{x}}\|_{p_2}\le\|{\mathbf{x}}\|_{p_1}
\end{equation}

2. Proof of the right part.

Based on H\"older inequality, we directly get
\begin{align}
\sum_{i=1}^N {|x_i|^{p_1}*1} \le & \left(\sum_{i=1}^N (|x_i|^{p_1})^{p_2/p_1}\right)^{p_1/p_2}\left(\sum_{i=1}^N {1^{1/(1-p_1/p_2)}}\right)^{1-p_1/p_2}\nonumber\\
=&N^{1-p_1/p_2}\left(\sum_{i=1}^N |x_i|^{p_2}\right)^{p_1/p_2}
\end{align}

Applying $p_1$-th root calculation to the above equation, we derive
\begin{equation}
\|{\mathbf{x}}\|_{p_1}\le N^{\frac{1}{p_1}-\frac{1}{p_2}}\|{\mathbf{x}}\|_{p_2}
\end{equation}

In summary, we proof the lemma.
\end{proof}

\textbf{Derivation of $c$ in Eqn.~(6)}:

Based on $p\ge1$ and the Minkowski inequality, we have
\begin{equation}
\|\hat{\mathbf{S}}-{\mathbf{S}}\|_p\le  \|\hat{\mathbf{S}}\|_p + \|{\mathbf{S}}\|_p
\end{equation}

Together with the above lemma and $\|\hat{\mathbf{S}}\|_q = \|{\mathbf{S}}\|_q=1$, we can derive
\begin{equation}
\|\hat{\mathbf{S}}\|_p + \|{\mathbf{S}}\|_p\le (\|\hat{\mathbf{S}}\|_q + \|{\mathbf{S}}\|_q)*
\begin{cases}
    N^{\frac{1}{p}-\frac{1}{q}} \\
    1 
\end{cases}
=
\begin{cases}
    2N^{\frac{1}{p}-\frac{1}{q}} & \mbox{if}\quad p<q, \\
    2 & \mbox{if}\quad p\ge q.
\end{cases}
\end{equation}

Thus
\begin{equation}
c = \max_{\|\hat{\mathbf{S}}\|_q = \|{\mathbf{S}}\|_q=1}{\|\hat{\mathbf{S}}-{\mathbf{S}}\|_p^p}=
\begin{cases}
    2^pN^{1-\frac{p}{q}} & \mbox{if}\quad p<q, \\
    2^p & \mbox{if}\quad p\ge q.
\end{cases}
\end{equation}

\section{$\hat{b}$ Curve in Our Experiment}
\label{sec:supp_b}
Figure~\ref{fig:bhat} shows the $\hat{b}$ curve with respect to iteration, and Figure~\ref{fig:avg bhat} shows the average $\hat{b}$ curve with respect to epoch. We can see that $\hat{b}$ is larger than 1 in practice.
\begin{figure}[!htb]
    \centering
    \includegraphics[width=.7\linewidth]{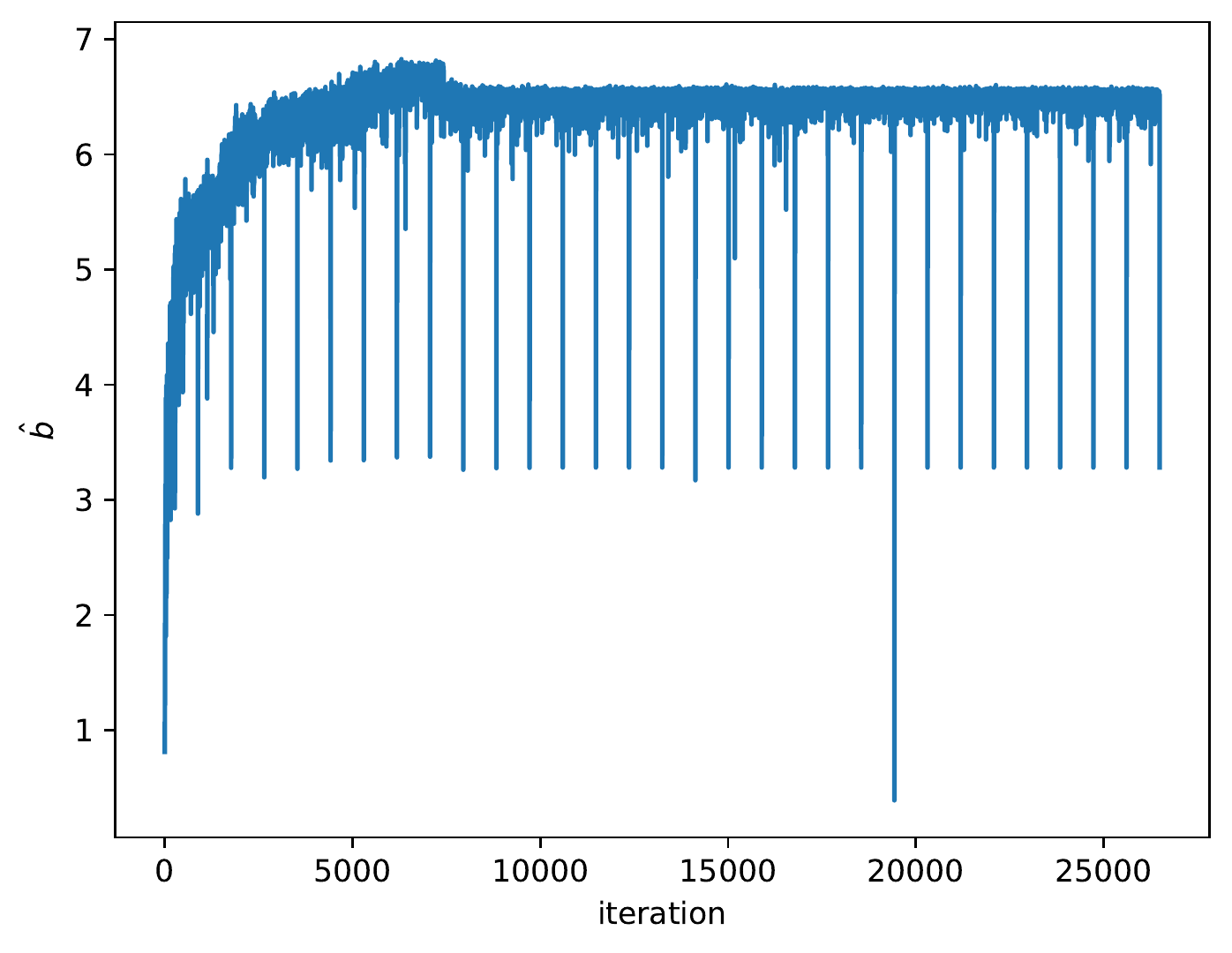}
    \caption{The $\hat{b}$ curve with respect to iteration in our experiment.}
    \label{fig:bhat}
\end{figure}

\begin{figure}[!htb]
    \centering
    \includegraphics[width=.7\linewidth]{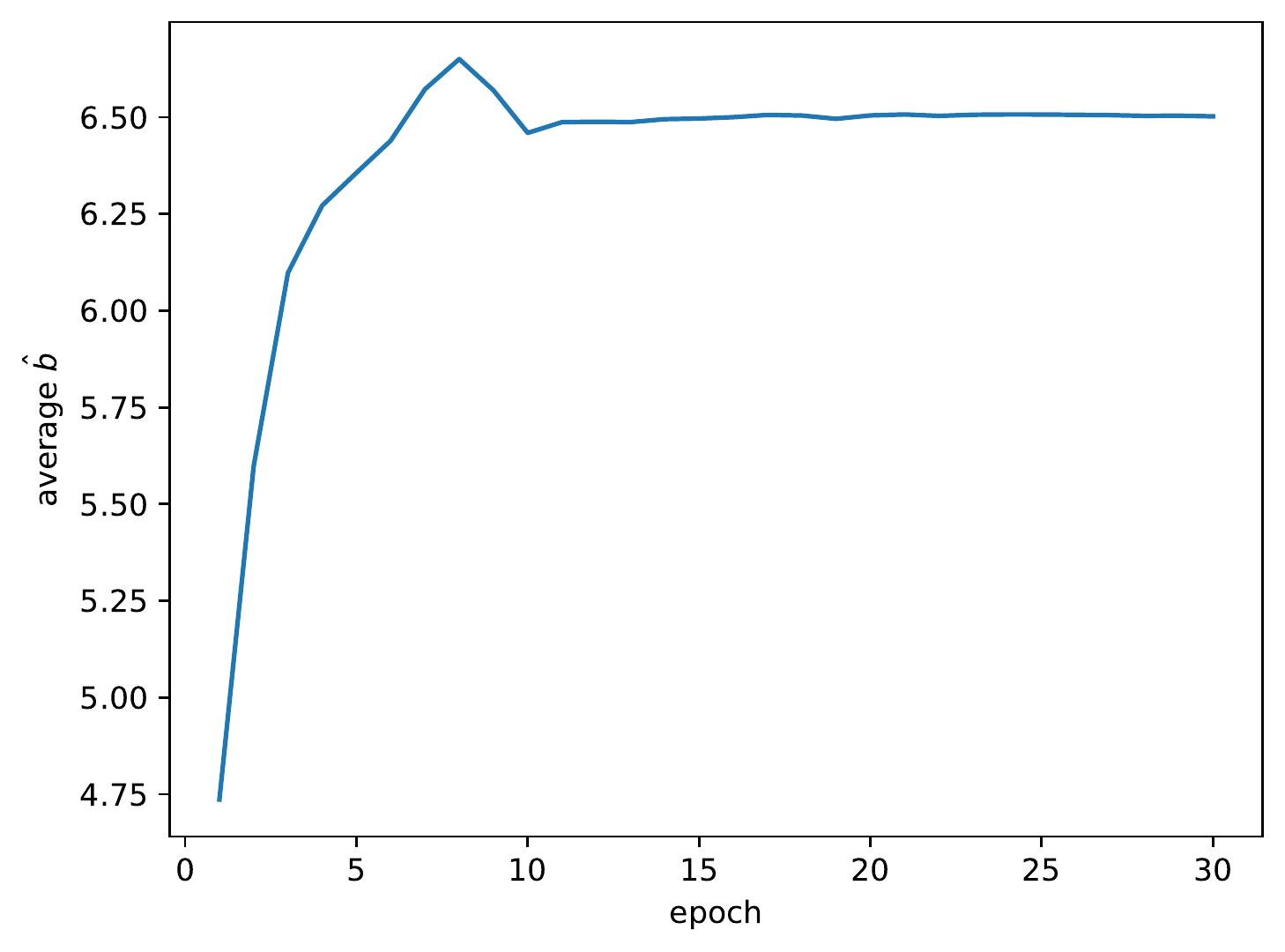}
    \caption{The average $\hat{b}$ curve with respect to epoch in our experiment.}
    \label{fig:avg bhat}
\end{figure}

\section{Proof of Theorem~\ref{theorem1} (Lipschitzness)}
\label{sec:supp_c}
\begin{proof}
Denote $\mathbf{Z}=\mathbf{g}-\frac{1}{N}\mathbf{1}^T\left\langle\mathbf{1}, \mathbf{g}\right\rangle$, and based on $\langle\mathbf{1}, \hat{\mathbf{S}}\rangle=0$, Eqn.~(19) becomes
\begin{align}
\mathbf{g}_n = \frac{1}{\hat{b}} \left\{\mathbf{Z}-\hat{\mathbf{S}}^T\langle\mathbf{Z},\hat{\mathbf{S}}\rangle\right\}
\end{align}

Thus
\begin{align}
\left\|\mathbf{g}_n\right\|^2 =& \frac{1}{\hat{b}^2} \left\|\mathbf{Z}-\hat{\mathbf{S}}^T\langle\mathbf{Z},\hat{\mathbf{S}}\rangle\right\|^2\nonumber\\
=& \frac{1}{\hat{b}^2} \left\{\|\mathbf{Z}\|^2-2\langle\mathbf{Z},\hat{\mathbf{S}}\rangle^2+\langle\hat{\mathbf{S}},\hat{\mathbf{S}}\rangle\langle\mathbf{Z},\hat{\mathbf{S}}\rangle^2\right\}\nonumber\\
=& \frac{1}{\hat{b}^2} \left\{\|\mathbf{Z}\|^2-2\langle\mathbf{Z},\hat{\mathbf{S}}\rangle^2+\langle\mathbf{Z},\hat{\mathbf{S}}\rangle^2\right\} (\mbox{Because}\ \|\hat{\mathbf{S}}\|^2=1) \nonumber\\
=& \frac{1}{\hat{b}^2} \left\{\|\mathbf{Z}\|^2-\langle\mathbf{Z},\hat{\mathbf{S}}\rangle^2\right\} \nonumber\\
=& \frac{1}{\hat{b}^2} \left\{\left\|\mathbf{g}-\frac{1}{N}\mathbf{1}^T\left\langle\mathbf{1}, \mathbf{g}\right\rangle\right\|^2-\left\langle\mathbf{g}-\frac{1}{N}\mathbf{1}^T\left\langle\mathbf{1}, \mathbf{g}\right\rangle,\hat{\mathbf{S}}\right\rangle^2\right\} \nonumber\\
=& \frac{1}{\hat{b}^2} \left\{\left\|\mathbf{g}-\frac{1}{N}\mathbf{1}^T\left\langle\mathbf{1}, \mathbf{g}\right\rangle\right\|^2-\left(\left\langle\mathbf{g},\hat{\mathbf{S}}\right\rangle-\frac{1}{N}\left\langle\mathbf{1}, \hat{\mathbf{S}}\right\rangle\left\langle\mathbf{1}, \mathbf{g}\right\rangle\right)^2\right\} \nonumber\\
=& \frac{1}{\hat{b}^2} \left\{\left\|\mathbf{g}-\frac{1}{N}\mathbf{1}^T\left\langle\mathbf{1}, \mathbf{g}\right\rangle\right\|^2-\left\langle\mathbf{g},\hat{\mathbf{S}}\right\rangle^2\right\} \nonumber\\
=& \frac{1}{\hat{b}^2} \left\{\left\|\mathbf{g}\right\|^2-\frac{1}{N}\left\langle\mathbf{1}, \mathbf{g}\right\rangle^2-\left\langle\mathbf{g},\hat{\mathbf{S}}\right\rangle^2\right\} 
\end{align}
\end{proof}

\section{Proof of Theorem~\ref{theorem2} ($\beta$-smoothness)}
\label{sec:supp_d}
\begin{proof}
When $q$ equals 2, Eqn.~(7) becomes
\begin{equation}
    \frac{\partial \hat{\mathbf{S}}}{\partial \hat{\mathbf{Q}}} = \frac{1}{\hat{b}}\left\{\mathbf{I}-\frac{1}{N}\mathbf{1}\mathbf{1}^T-\hat{\mathbf{S}}\hat{\mathbf{S}}^T\right\}=\frac{1}{\hat{b}}\mathbf{K}
\end{equation}

And based on $\langle\mathbf{1}, \hat{\mathbf{S}}\rangle=0, \langle\hat{\mathbf{S}}, \hat{\mathbf{S}}\rangle=1$, we derive the property of $\mathbf{K}$ as follow.
\begin{equation}
    \mathbf{K}\hat{\mathbf{S}}=\mathbf{0},\hat{\mathbf{S}}^T\mathbf{K}=\mathbf{0}^T, \mathbf{K}^2=\mathbf{K}=\mathbf{K}^T
\end{equation}

Denote $\mathbf{M}= \mathbf{1}\mathbf{1}^T=\mathbf{M}^T$, and we know $\mathbf{H}=\mathbf{H}^T$. Thus
\begin{equation}
    \mathbf{g}_n=\frac{1}{\hat{b}}\mathbf{K}{\mathbf{g}}
\end{equation}
\begin{equation}
    \frac{\partial \mathbf{1}\left\langle\mathbf{1}, {\mathbf{g}}\right\rangle}{\partial \hat{\mathbf{S}}} =  \mathbf{1}\mathbf{1}^T\frac{\partial^2 l}{\partial \hat{\mathbf{S}}^2}=\mathbf{M}\mathbf{H}
\end{equation}
\begin{align}
    \frac{\partial \hat{\mathbf{S}}\langle{\mathbf{g}},\hat{\mathbf{S}}\rangle}{\partial \hat{\mathbf{S}}} =& \langle{\mathbf{g}},\hat{\mathbf{S}}\rangle\frac{\partial \hat{\mathbf{S}}}{\partial \hat{\mathbf{S}}} + \hat{\mathbf{S}}\frac{\partial \langle{\mathbf{g}},\hat{\mathbf{S}}\rangle}{\partial \hat{\mathbf{S}}}\nonumber\\
    =& \langle{\mathbf{g}},\hat{\mathbf{S}}\rangle + \hat{\mathbf{S}}\hat{\mathbf{S}}^T\mathbf{H}+\hat{\mathbf{S}}{\mathbf{g}}^T
\end{align}
\begin{align}
    \frac{\partial \frac{1}{\hat{b}}}{\partial \hat{\mathbf{Q}}} = -\frac{1}{\hat{b}^3}(\hat{\mathbf{Q}}^T-\mathbf{1}^T\hat{a})=-\frac{1}{\hat{b}^2}\hat{\mathbf{S}}^T
\end{align}

So
\begin{align}
    \mathbf{H}_n = & \frac{\partial }{\partial \hat{\mathbf{Q}}}\left\{\frac{1}{\hat{b}} \left[ {\mathbf{g}}-\frac{1}{N}\mathbf{1}\left\langle\mathbf{1}, {\mathbf{g}}\right\rangle-\hat{\mathbf{S}}\langle{\mathbf{g}},\hat{\mathbf{S}}\rangle\right]\right\} \nonumber\\
    = & \frac{1}{\hat{b}}\frac{\partial }{\partial \hat{\mathbf{S}}}\left\{ {\mathbf{g}}-\frac{1}{N}\mathbf{1}\left\langle\mathbf{1}, {\mathbf{g}}\right\rangle-\hat{\mathbf{S}}\langle{\mathbf{g}},\hat{\mathbf{S}}\rangle\right\}\frac{\partial \hat{\mathbf{S}}}{\partial \hat{\mathbf{Q}}} + \left\{{\mathbf{g}}-\frac{1}{N}\mathbf{1}\left\langle\mathbf{1}, {\mathbf{g}}\right\rangle-\hat{\mathbf{S}}\langle{\mathbf{g}},\hat{\mathbf{S}}\rangle\right\}\frac{\partial \frac{1}{\hat{b}}}{\partial \hat{\mathbf{S}}} \nonumber\\
    = & \frac{1}{\hat{b}^2}\left\{ \mathbf{H}-\frac{1}{N}\mathbf{M}\mathbf{H}-\langle{\mathbf{g}},\hat{\mathbf{S}}\rangle-\hat{\mathbf{S}}\hat{\mathbf{S}}^T\mathbf{H}-\hat{\mathbf{S}}{\mathbf{g}}^T\right\}\left\{\mathbf{I}-\frac{1}{N}\mathbf{M}-\hat{\mathbf{S}}\hat{\mathbf{S}}^T\right\} \nonumber\\
    & - \frac{1}{\hat{b}^2}\left\{{\mathbf{g}}-\frac{1}{N}\mathbf{1}\left\langle\mathbf{1}, {\mathbf{g}}\right\rangle-\hat{\mathbf{S}}\langle{\mathbf{g}},\hat{\mathbf{S}}\rangle\right\}\hat{\mathbf{S}}^T \nonumber\\  
    = & \frac{1}{\hat{b}^2} \left\{\mathbf{I}-\frac{1}{N}\mathbf{M}-\hat{\mathbf{S}}\hat{\mathbf{S}}^T\right\}\mathbf{H}\left\{\mathbf{I}-\frac{1}{N}\mathbf{M}-\hat{\mathbf{S}}\hat{\mathbf{S}}^T\right\}\nonumber\\
    & -\frac{1}{\hat{b}^2}\left\{\langle{\mathbf{g}},\hat{\mathbf{S}}\rangle+\hat{\mathbf{S}}{\mathbf{g}}^T\right\}\left\{\mathbf{I}-\frac{1}{N}\mathbf{M}-\hat{\mathbf{S}}\hat{\mathbf{S}}^T\right\} \nonumber\\
    & - \frac{1}{\hat{b}^2}\left\{\mathbf{I}-\frac{1}{N}\mathbf{M}-\hat{\mathbf{S}}\hat{\mathbf{S}}^T\right\}{\mathbf{g}}\hat{\mathbf{S}}^T 
 \end{align}
 
 Note that $\mathbf{K}=\mathbf{I}-\frac{1}{N}\mathbf{1}\mathbf{1}^T-\hat{\mathbf{S}}\hat{\mathbf{S}}^T$, and we have 
 \begin{align}
    \mathbf{H}_n = \frac{1}{\hat{b}^2} \left(\mathbf{K}\mathbf{H}\mathbf{K}-\langle{\mathbf{g}},\hat{\mathbf{S}}\rangle\mathbf{K}-\hat{\mathbf{S}}{\mathbf{g}}^T\mathbf{K}-\mathbf{K}{\mathbf{g}}\hat{\mathbf{S}}^T\right)
\end{align}

Then
\begin{align}
    \hat{b}^4 \mathbf{g}_n^T\mathbf{H}_n\mathbf{g}_n = &  {\mathbf{g}}^T\mathbf{K}\left(\mathbf{K}\mathbf{H}\mathbf{K}-\langle{\mathbf{g}},\hat{\mathbf{S}}\rangle\mathbf{K}-\hat{\mathbf{S}}{\mathbf{g}}^T\mathbf{K}-\mathbf{K}{\mathbf{g}}\hat{\mathbf{S}}^T\right)\mathbf{K}{\mathbf{g}} \nonumber\\
    = & {\mathbf{g}}^T\mathbf{K}^2\mathbf{H}\mathbf{K}^2{\mathbf{g}}-\langle{\mathbf{g}},\hat{\mathbf{S}}\rangle{\mathbf{g}}^T\mathbf{K}^3{\mathbf{g}}-{\mathbf{g}}^T\mathbf{K}\hat{\mathbf{S}}{\mathbf{g}}^T\mathbf{K}^2{\mathbf{g}}-{\mathbf{g}}^T\mathbf{K}^2{\mathbf{g}}\hat{\mathbf{S}}^T\mathbf{K}{\mathbf{g}} \nonumber\\
    = & {\mathbf{g}}^T\mathbf{K}\mathbf{H}\mathbf{K}{\mathbf{g}}-\langle{\mathbf{g}},\hat{\mathbf{S}}\rangle{\mathbf{g}}^T\mathbf{K}^2{\mathbf{g}}-{\mathbf{g}}^T(\mathbf{K}\hat{\mathbf{S}}){\mathbf{g}}^T\mathbf{K}{\mathbf{g}}-{\mathbf{g}}^T\mathbf{K}{\mathbf{g}}(\hat{\mathbf{S}}^T\mathbf{K}){\mathbf{g}}\nonumber\\
    = & ({\mathbf{g}}^T\mathbf{K})\mathbf{H}(\mathbf{K}{\mathbf{g}})-\langle{\mathbf{g}},\hat{\mathbf{S}}\rangle({\mathbf{g}}^T\mathbf{K})(\mathbf{K}{\mathbf{g}})\nonumber\\
    = & \hat{b}^2 \left\{\mathbf{g}_n^T\mathbf{H}\mathbf{g}_n-\langle{\mathbf{g}},\hat{\mathbf{S}}\rangle\left\|\mathbf{g}_n\right\|^2\right\}
\end{align}

That is
\begin{align}\label{eq:beta-smoothness}
    \mathbf{g}_n^T\mathbf{H}_n\mathbf{g}_n = \frac{1}{\hat{b}^2} \left\{\mathbf{g}_n^T\mathbf{H}\mathbf{g}_n-\langle{\mathbf{g}},\hat{\mathbf{S}}\rangle\left\|\mathbf{g}_n\right\|^2\right\}
\end{align}

Further, when $p$ equals 2, we derive ${\mathbf{g}} =\frac{2}{c}(\hat{\mathbf{S}}-{\mathbf{S}}), \mathbf{H}=\frac{2}{c}\mathbf{I}$. Then
\begin{align}
\mathbf{g}_n^T\mathbf{H}\mathbf{g}_n = & \left(\frac{1}{\hat{b}}\mathbf{K}{\mathbf{g}}\right)^T\frac{2}{c}\mathbf{I}\left(\frac{1}{\hat{b}}\mathbf{K}{\mathbf{g}}\right)\nonumber\\
= & \frac{2}{c\hat{b}^2}{\mathbf{g}}^T\mathbf{K}^T\mathbf{K}{\mathbf{g}} = \frac{2}{c\hat{b}^2}{\mathbf{g}}^T\mathbf{K}^2{\mathbf{g}} = \frac{2}{c\hat{b}^2}{\mathbf{g}}^T\mathbf{K}{\mathbf{g}} \nonumber\\
= & \frac{2}{c\hat{b}^2}{\mathbf{g}}^T\left\{\mathbf{I}-\frac{1}{N}\mathbf{1}\mathbf{1}^T-\hat{\mathbf{S}}\hat{\mathbf{S}}^T\right\}{\mathbf{g}} \nonumber\\
= & \frac{1}{\hat{b}^2}\left\{{\mathbf{g}}^TH{\mathbf{g}}-\frac{2}{cN}\langle{\mathbf{g}},\mathbf{1}\rangle^2-\frac{2}{c}\langle{\mathbf{g}},\hat{\mathbf{S}}\rangle^2\right\}
\end{align}

\begin{align}
\langle{\mathbf{g}},\hat{\mathbf{S}}\rangle = \langle\frac{2}{c}(\hat{\mathbf{S}}-\mathbf{S}),\hat{\mathbf{S}}\rangle=\frac{2}{c}(1-\langle\mathbf{S},\hat{\mathbf{S}}\rangle)
\end{align}

And
\begin{align}
\left\langle{\mathbf{g}},\mathbf{1}\right\rangle=\left\langle\frac{2}{c}(\hat{\mathbf{S}}-{\mathbf{S}}),\mathbf{1}\right\rangle=\frac{2}{c}\left(\left\langle\hat{\mathbf{S}},\mathbf{1}\right\rangle-\left\langle{\mathbf{S}},\mathbf{1}\right\rangle\right)=0
\end{align}

Based on the above equation and Theorem~\ref{theorem1}, we have
\begin{align}
\left\|\mathbf{g}_n\right\|^2 = \frac{1}{\hat{b}^2} \left\{\left\|{\mathbf{g}}\right\|^2-\left\langle{\mathbf{g}},\hat{\mathbf{S}}\right\rangle^2\right\} 
\end{align}

Use the above four equations to substitute the corresponding parts in Eqn.~(\ref{eq:beta-smoothness}), we can derive the following equation.
\begin{align}
    {\mathbf{g}}_n^T\mathbf{H}_n{\mathbf{g}}_n = & \frac{1}{\hat{b}^4} \left\{\mathbf{g}^T\mathbf{H}\mathbf{g}\right. \nonumber\\
    & \left.-\frac{2}{c}\left(1-\langle{\mathbf{S}},\hat{\mathbf{S}}\rangle\right)\left[\frac{4}{c^2}\left(1-\langle{\mathbf{S}},\hat{\mathbf{S}}\rangle\right)+\left\|{\mathbf{g}}\right\|^2-\left\langle{\mathbf{g}},\hat{\mathbf{S}}\right\rangle^2\right]\right\}.
\end{align}

\end{proof}

\textit{Remark}: Based on Cauchy inequality, $\langle{\mathbf{S}},\hat{\mathbf{S}}\rangle\le\sqrt{\|\mathbf{S}\|^2\|\hat{\mathbf{S}}\|^2}=1, \langle{\mathbf{g}},\hat{\mathbf{S}}\rangle^2\le\|{\mathbf{g}}\|^2\|\hat{\mathbf{S}}\|^2=\|{\mathbf{g}}\|^2$. So $\left(1-\langle{\mathbf{S}},\hat{\mathbf{S}}\rangle\right)\left\{\frac{4}{c^2}\left(1-\langle{\mathbf{S}},\hat{\mathbf{S}}\rangle\right)+\left\|{\mathbf{g}}\right\|^2-\left\langle{\mathbf{g}},\hat{\mathbf{S}}\right\rangle^2\right\}\ge0$. 
This indicates that the embedded normalization reduces the quadratic form of the loss Hessian. 
At the meantime, since $\left\langle{\mathbf{g}},\mathbf{1}\right\rangle=0$, $\frac{\partial l}{\partial \hat{a}}$ does not contribute to the reduction and it is all provided by $\frac{\partial l}{\partial \hat{b}}$.

\section{Additional Results}

\subsection{Different Optimizers}
Besides Adam, we show additional results with the SGD/Adadelta optimizer.
The experimental setting for Table~\ref{tab:optimizer} is similar to the experimental setting for Figure~\ref{fig:sensitivity}\subref{fig:learningrate}. We just replaced Adam with SGD or Adadelta. It can be seen that the model performances are very sensitive to the initial learning rates when using the SGD or Adadelta optimizer. Besides, the best initial learning rate is 1e-1 for the proposed loss, 1e-2 for the MAE loss, and 1e-4 for the MSE loss.
From Table~\ref{tab:optimizer}, we can see that the proposed loss is better than the MAE and MSE losses when the SGD or Adadelta optimizer is used and the best initial learning rate is chosen. 

\begin{table}[!htb]
\centering
\caption{PLCC comparisons under SGD/Adadelta optimizer}
\label{tab:optimizer}
%\resizebox{.9\linewidth}{!}{
\begin{tabular}{lcccc}  
\toprule
Initial learning rate & MAE loss & MSE loss & Proposed loss \\
\midrule
1e-1 & 0.843/0.780 & Failed/0.069 & \textbf{0.931}/\textbf{0.930} \\
1e-2 & 0.909/0.861 & 0.781/0.690 & 0.916/0.911 \\
1e-3 & 0.868/0.068 & 0.839/0.701 & 0.899/0.889 \\
1e-4 & 0.620/0.007 & 0.890/0.739 & 0.868/0.808 \\
1e-5 & 0.090/0.138 & 0.851/0.458 & 0.770/0.458 \\
\bottomrule
\end{tabular}
%}
\end{table}

\subsection{Different Architectures}
\label{sec:supp_e}
The experimental setting for Table~\ref{tab:architecture} is similar to the experimental setting for Sec.~\ref{sec:4.3.4}. We just used the non-BN network architecture (AlexNet or VGG-16) instead of the ResNet-based network as the backbone. 
From Table~\ref{tab:architecture}, we can see that the proposed loss is better than the MSE and MSE losses when AlexNet/VGG-16 is used as the backbone. 
Together with the experiments on ResNet-based backbones, it can be seen that Batch Normalization layers in networks will not affect the superiority new loss.

\begin{table}[!htb]
\centering
\caption{SROCC/PLCC comparisons under non-BN network architectures}
\label{tab:architecture}
%\resizebox{.9\linewidth}{!}{
\begin{tabular}{lcccc}  
\toprule
Backbone & MAE loss & MSE loss & Proposed loss \\
\midrule
AlexNet & 0.788/0.779 & 0.811/0.799 & 0.879/0.886 \\
VGG-16 & 0.840/0.834 & 0.844/0.835 & 0.910/0.913 \\
\bottomrule
\end{tabular}
%}
\end{table}

\subsection{t-test}
In the paper, the results shown in Table~\ref{tab:pq} and Table~\ref{tab:SOTA} were based on the experiments on one train-validation-test split provided by the KonIQ-10k dataset's owner. However, for performing the t-test to verify whether the performance gains in Table~\ref{tab:pq} and Table~\ref{tab:SOTA} are statistically significant or not, we need several experiments on different train-validation-test splits.
We conducted experiments on 10 random splits of CLIVE with the ResNet-18 backbone and performed the t-test for different combinations of p and q, as well as ``$l$'' and ``$l+0.1l'$''. 
The results show that, in terms of PLCC, 
(a) ``$p=1, q=2$'' is significantly better than ``$p=2, q=1$'' (p-value 0.018) and ``$p=2, q=2$'' (p-value 0.028), while it is on par with ``$p=1, q=1$'' (p-value 0.385). 
(b) ``$l+0.1l'$'' is slightly (but not significantly) better than ``$l$'' (p-value 0.172).

% \section{Online Resources}

\end{document}